\newtheorem{observation}{Observation}
\begin{document}

\title{Ranking with submodular functions on a budget
}


\author{Guangyi Zhang         \and
        Nikolaj Tatti	\and
        Aristides~Gionis 
}


\institute{Guangyi Zhang \at
KTH Royal Institute of Technology, Sweden\\
              \email{guaz@kth.se} 
           \and
           Nikolaj Tatti \at
           HIIT, University of Helsinki, Finland\\
              \email{nikolaj.tatti@helsinki.fi}
           \and
           Aristides Gionis \at
           KTH Royal Institute of Technology, Sweden\\
              \email{argioni@kth.se}
}

\date{Received: date / Accepted: date}

\maketitle

\begin{abstract}
Submodular maximization has been the backbone of many important machine-learning problems, 
and has applications to viral marketing, diversification, sensor placement, and more.
However, the study of maximizing submodular functions has mainly been restricted 
in the context of selecting a set of items.
On the other hand, many real-world applications require a solution that is a ranking over a set of items.
The problem of ranking in the context of submodular function maximization has been considered before, 
but to a much lesser extent than item-selection formulations.
In this paper, we explore a novel formulation for ranking items with submodular valuations and budget constraints. 
We refer to this problem as  \emph{max-submodular ranking} (\msr).
In more detail, given a set of items and a set of non-decreasing submodular functions, 
where each function is associated with a budget, 
we aim to find a ranking of the set of items that maximizes 
the sum of values achieved by all functions under the budget constraints.
For the \msr problem with cardinality- and knapsack-type budget constraints
we propose practical algorithms with approximation guarantees.
In addition, we perform an empirical evaluation, 
which demonstrates the superior performance of the proposed algorithms
against strong baselines.
\keywords{Ranking \and Submodular maximization \and Dynamic programming \and Approximation algorithms}
\end{abstract}

\section{Introduction}
\label{section:intro}

Combinatorial optimization plays a central role in many machine-learning problems.
One prevalent approach to solve such problems is via \emph{submodular-optimization} techniques.
The popularity of submodular-optimization methods results from the fact that in many real-world settings 
the objective function exhibits the ``diminishing returns'' property, 
as well as from the ever-growing rich toolkit that has been developed in the past decades.
One fundamental primitive in this toolkit is \emph{submodular maximization} \citep{krause2014submodular}, 
which has been the backbone of a number of important problems, such as
sensor placement \citep{krause2008near},
viral marketing in social networks \citep{kempe2015maximizing},
document summarization \citep{lin2011class}, 
and more.

Submodular optimization has mainly been studied in the context of \emph{subset-selection problems}.
However, in many real-world applications the goal is to find a \emph{ranking} over a set of items.
Finding a ranking is a significantly more challenging task than subset selection,
as the search space is \revise{factorially} larger.
One successful attempt of applying ideas from submodular optimization to ranking 
is the \emph{submodular-ranking} problem (\sr) \citep{azar2011ranking}.
In this problem, given a set of items and a set of submodular functions, 
the goal is to find a (partial) ranking of the items so as to minimize the average ``cover time'' of all functions.

An exemplary application of \sr is in the \emph{multiple intents re-ranking} problem \citep{azar2009multiple}, 
which has applications in web searching.
In this problem setting, a user query may correspond to multiple user intents.
For example, a query of ``java'' may mean a programming language, an island, or a type of coffee.
Even for a seemingly unambiguous query, such as ``New York,'' 
there exist many possible intents, for example, attractions, cuisine, travel, cultural events, etc.
In the absence of an explicit user intent, we need to consider all possibilities.
The \sr formulation proposes to model each intent as a submodular function, 
whose value improves when a non-redundant web page of the right intent is encountered, 
and reaches a maximum when the user is satisfied, i.e., having gathered sufficient information.
The goal is to produce a ranking of web pages that minimizes the expected number of pages a user has to browse 
before they satisfy their information needs.
The expectation here is over the distribution of different user intents, 
which for this particular application can be assumed to be known.

While the \sr formulation can be useful in some cases, 
it fails to model realistically a number of other applications.
Critically, it assumes that a demand can wait indefinitely before it gets satisfied.
In the previous example, for instance, it is assumed that users will keep reading  
down a ranked list of web pages until they gather enough information.
In reality, a budget can be set for the amount of service that a user receives.
The budget can be the number of web pages to browse, 
or the time to spend on the web-search task.
A user stops receiving service once the budget is exceeded.
Moreover, the budget can vary across different demands.
For example, a user intent can be classified into one of three types, 
informational, navigational, and transactional \citep{jansen2008determining},
and each may come with a different budget, 
translating to the amount of ``patience'' that a user exhibit to obtain results for each type.
User intents and budgets can be readily extracted from the past search~logs.

To accommodate budgeted versions of the submodular ranking problem, 
we propose a new formulation, which we call \emph{max-submodular ranking} (\msr).
In the \msr problem, we are given a set of non-decreasing submodular functions, each associated with a budget.
We aim to find a ranking that, instead of minimizing the total coverage time of the functions, 
maximizes the sum of function values (coverage) under individual budget constraints.
In other words, every item in the ranking incurs a cost, and each function is evaluated 
at the maximal prefix of the ranked sequence that does not exceed its budget.
A precise formulation of the \msr problem is provided in Section~\ref{section:definition}.

In this paper, 
we propose practical algorithms with approximation guarantees for \msr, 
when the budget constraints are either cardinality or knapsack constraints.
We also note that the well-known \emph{constrained submodular maximization} and 
\emph{minimum submodular cover} problems are special cases of \msr and \sr, respectively, 
when there is a single submodular function.
In this sense, the \msr problem we define is a dual problem of \sr, 
in the same way that \emph{max $k$-cover} is a dual problem of \emph{minimum set cover}.

\msr has great potential to be applied in other scenarios, 
such as in the case where the submodular functions are 0--1 activation functions.
We call this special case \emph{max-activation ranking} (\mar) problem.
The idea is to activate as many demands as possible with a common ranking of items, or services,
under individual budget constraints.
As an example, 
some subscription-based streaming media services, such as Netflix, produce content in a data-driven fashion.
One possibility is to arrange the plot structure in a TV series such that the maximum number of audience 
will get interested before their individual cut-off points for a new show. 
The goal for the TV series producer is to encourage the maximum-size audience to continue watching.
A plot structure can be characterized as a sequence of scenes, each described by a set of tags, 
such as romantic, adventurous, funny, etc., which may interest particular audience.
Similar applications can also be found in ranking commercial ads,  ranking customer reviews, 
creating play lists for music streaming services, and more.

In concrete, our contributions in this paper are summarized as follows.
\begin{itemize}
	\item We introduce the novel problem of \emph{max-submodular ranking} (\msr), 
			where the goal is to find a ranking of a set of items
			so as to maximize the total value of a set of submodular functions
			under budget constraints.
	\item We prove that a simple greedy algorithm achieves a factor-2 approximation for the \msr problem 
		under cardinality constraints, which is tight for this particular greedy algorithm.
	\item We show that a weighted greedy algorithm that pays more attention to functions with small budget 
		achieves a factor-3 approximation for the \msr problem under cardinality constraints.
		While its worst-case bound is worse, there are natural problem instances 
		for which the weighted greedy finds better solutions than its unweighted counterpart.
	\item We devise a new algorithm that returns the best solution among the solutions found by 
		a cost-efficient greedy algorithm and a ranking of ``large'' items produced by dynamic programming. 
		Our algorithm achieves an approximation factor arbitrarily close to 4 for the \msr problem under knapsack constraints.
	\item We empirically evaluate and compare different algorithms on real-life datasets, and find that the proposed algorithms achieve superior performance when compared with strong baselines.
\end{itemize}

The rest of the paper is organized as follows.
We start by discussing the related work in Section \ref{section:related}, 
and we formally introduce the \msr problem in Section \ref{section:definition}.
The unweighted and weighted greedy algorithms for \msr under cardinality constraints are 
presented and analyzed in Sections \ref{section:greedy} and \ref{section:wgreedy}, respectively.
The novel algorithm for the \msr problem under knapsack constraints 
is introduced and analyzed in Section \ref{section:knapsack}.
We present our empirical evaluation in Section \ref{section:experiments}, 
and we offer our concluding remarks in Section \ref{section:conclusion}.

\section{Related work}
\label{section:related}

\paragraph{Submodular maximization.}
Submodular maximization is a special case of our formulation when given only a single function.
Coupled with a non-decreasing property and with a cardinality constraint
it is well-known that a simple greedy algorithm achieves a $e/(e-1)$ 
approximation \citep{nemhauser1978analysis}, which is also shown to be tight \citep{nemhauser1978best}. 
For a more general budget constraint, a natural algorithm is to return the best solution among the 
solutions found by a cost-efficient greedy method and by selecting the best singleton item.
Recently, the approximation factor of this ``best-of-two'' algorithm was shown to be within 
$[1/0.462, 1/0.427]$ \citep{feldman2020practical}.
A better 2-approximation is achieved by another greedy variant that returns the best solution
among the solutions found by a cost-efficient greedy algorithm and 
all its intermediate solutions, 
each augmented with the best single additional item \citep{yaroslavtsev2020bring}.

\paragraph{Submodularity for a sequence function.}
A sequential utility function is defined as $f: \seqs \to \real$, 
where \seqs is the set of all possible sequences of subsets of a ground set of items~\V.
Note that a set function can be seen as a special sequence function, in which the diminishing-returns effect holds for any subsequence relation.
\citet{streeter2008online} and \citet{zhang2012submodularity} introduce a notion of \emph{string submodularity}, which restricts the diminishing returns to only the prefix subsequence relation.
That is to say, a function $f$ is string submodular if appending an item to a sequence 
results in no larger marginal gain than appending the item to a prefix of the sequence.
The goal is to find a sequence of a given length that maximizes the value of the function $f$.
In our formulation, the sum of multiple submodular functions remains submodular, and thus, string submodular.
However, the analysis in the prior work does not apply in our case 
as we assume that each submodular function is associated with a different budget constraint. 


\paragraph{Submodular ranking.}
\citet{azar2011ranking} propose the \emph{submodular ranking} (\sr) problem, 
which aims to find a permutation to minimize the average ``cover time'' of a set of submodular functions, 
where we say that an input sequence ``covers'' a function if it evaluates to the maximum value of the function, 
and the ``cover time'' of a sequence of items is the shortest prefix of the sequence for which the function is covered.
The problem we study in this paper can be seen as a dual problem of the \sr problem.
The \sr problem originates from the classic min-sum set cover (\mssc) problem \citep{feige2004approximating} 
and its generalizations \citep{azar2009multiple,gamzu2010web}.

\paragraph{Diversified web search.}
In web search, in the absence of the explicit user intent, it is desirable to provide a sequence of high-quality and diverse documents that account for the interests of the overall user population.
Typically, the diversity is evaluated by the coverage at the topical level of some existing taxonomy \citep{zhai2015beyond}.
\citet{carbonell1998use} propose a greedy algorithm with respect to \emph{maximal marginal relevance} (MMR) to reduce the redundancy among returned documents.
\citet{bansal2010approximation} define the problem of finding an ordering of search results that maximizes the discounted cumulative gain (DCG), i.e., the sum of discounted gains of different user types, where the discount factor increases if a user type is satisfied later on.
They show that, in some special cases, the DCG metric can be rewritten as a weighted sum of submodular functions.
Our framework contributes to this theme by, for example, casting each user type or topic as a submodular function.

\section{Problem definition}
\label{section:definition}

We are given a universe set \V with $|\V|=\nV$ items, 
a set of \nfsms non-decreasing submodular functions $\fsms=\{ \fsm_1, \ldots, \fsm_\nfsms \}$, 
and a cost function $\cost: \V \to \real_+$.
Recall that a set function $\fsm: 2^\V \to \real_+$ is non-decreasing
if $\fsm(\sett) \le \fsm(\sets)$ for every $\sett\subseteq\sets\subseteq\V$, 
and it is submodular if 
$\fsm(\sett\cup\{\itemv\}) - \fsm(\sett)\ge\fsm(\sets\cup\{\itemv\}) - \fsm(\sets)$
for every $\sett\subseteq\sets\subseteq\V$ and $\itemv\in\V\setminus\sets$.
Furthermore, each function $\fsm_i$ is associated with a budget $\bg_i \in \real_+$.
We will often write $\fsm(v \mid S)$ to mean $f(\{v\} \cup S) - f(S)$.

Let $\perm(\V)$ denote the set of permutations of \V, that is, 
$\perm(\V) =\{ \rk : \V \to \V \mid \rk \text{ is a permutation}\}$.
Our goal is to find a permutation $\rk \in \perm(\V)$ 
to maximize the sum of function values $\fsm_i(\rk_{\idx_i})$, 
where the input set $\rk_{\idx_i}$ is a prefix of the sought permutation \rk with feasibility constraints.
In particular, we consider that each function $\fsm_i$ receives as input 
the \emph{maximal prefix} of \rk that fits within its corresponding budget $\bg_i$.
\revise{In other words, the permutation \rk can be seen as a sequence of nested sets, one for each function.}
Formally, the \emph{max-submodular ranking} (\msr) problem that we study in this paper
is defined as follows.
\begin{problem}[Max-submodular ranking (\msr)]
\label{problem:msr}
Given a set of items \V, 
a set of non-decreasing and submodular functions $\fsms=\{ \fsm_1, \ldots, \fsm_\nfsms \}$, 
a cost function $\cost: \V \to \real_+$, 
and non-negative budgets $\bg_i$ for each function $\fsm_i$, 
the \msr problem aims to find a permutation $\rk \in \perm(\V)$ that maximizes the sum
\begin{align}
 &   \sum_{\fsm_i \in \fsms} \fsm_i(\rk_{\idx_i}), \\
 \text{such that }~ &
    \idx_i = \max \{ j \in [\nV] : \cost(\rk_j) \le \bg_i \}, \nonumber 
\end{align}
where $\rk_j$ is the prefix of the permutation \rk of length $j$ and $\cost(\rk_j) = \sum_{v \in \rk_j} \cost(v)$. 
\end{problem}

We make a number of observations for Problem~\ref{problem:msr}.

Without loss of generality, we can assume that
$\fsm_i(\emptyset)=0$; 
otherwise we can translate the objective function by $\sum_{\fsm_i \in \fsms} \fsm_i(\emptyset)$.


Also note that not all items in the permutation solution \rk will necessarily be used
as an input to some function $\fsm_i\in\fsms$.
Instead, only the items in $\rk_{\idx_i}$ for the largest $\idx_i$ will be used.
For this reason, we can think that the output to the \msr problem is a \emph{partial} permutation;
after all functions deplete their budget, the remaining items of the permutation does not matter.

Finally, note that when the cost function \cost is uniform, i.e., $\cost(\cdot)=1$, \revise{we can consider only integral budget $\bg_i$ and assume $\idx_i = \bg_i$}.


With respect to the hardness of approximation of the \msr problem, 
we observe that \msr is equivalent to the standard submodularity-maximization problem when $\nfsms=1$, 
that is, when there is only one function in \fsms. 
A second reduction from the standard submodularity-maximization problem
can be obtained by letting $\bg_i=\bg$, for all $i=1,\ldots,\nfsms$,
i.e., when the same budget is used for all functions.
The reason is that in this case the sum of submodular functions remains submodular, 
and we ask to maximize a submodular function under a cardinality constraint.
We conclude the following hardness result.

\begin{remark}[\cite{nemhauser1978best}]
For solving the max-submodular ranking (\msr) problem,
no algorithm requiring a polynomial number of function evaluations 
can achieve a better approximation guarantee than $e/(e-1)$.
\end{remark}

\revise{It is also well-known that \emph{maximum $k$-cover}, a special case of submodular maximization, is a dual problem to the \emph{minimum set cover} problem,
where the constraint in one problem is treated as the objective function in the other \citep{feige1998threshold}.
More generally, the \msr problem can be considered as the dual problem to the \emph{submodular-ranking} problem (\sr) \citep{azar2011ranking},
whose goal is to find a (partial) ranking of the items so as to minimize the average ``cover time'' of all functions.}

We conclude the section by 
introducing some additional notation that will be used in our analysis.
The optimal permutation is denoted by $\rkopt$.
We use the operator $\oplus$ to denote sequence concatenation and overload operator $\subseteq$ for subsequence relation.

\section{Cardinality constraints}
\label{section:cardinality}

We start our analysis of the \msr problem
for the case of cardinality constraints, that is, 
when the item costs are uniform ($\cost(\cdot)=1$).
For this particular case we present two algorithms, called \greedy and \wgreedy, 
both having provable guarantees.
Both algorithms generate a permutation by greedily selecting one item before the next.
Pseudocode for both algorithms is shown in a unified manner in Algorithm~\ref{alg}.
The difference in the two algorithms lies in adopting different coefficients $\coef_i$, 
associated with the submodular functions $\fsm_i$, in their selection criteria.
The first algorithm, \greedy, is an unweighted greedy ($\coef_i=1$) 
with respect to the submodular functions $\fsm_i$.
The second algorithm, \wgreedy, is a weighted greedy ($\coef_i=1/\bg_i$) 
that puts more weight on functions with smaller~budget.

The worst-case running time of both algorithms is $\bigO(\nV^2 \nfsms)$.
In practice, they run much faster and their actual running time grows almost linearly in $\nV$, 
thanks to applying a standard lazy evaluation technique \citep{leskovec2007cost}.
More details on scalability are discussed in Section \ref{subsection:runtime}.

\begin{algorithm}[t]
\caption{\greedyalg (A generalized algorithm for both \greedy and \wgreedy)}\label{alg}
\hspace*{\algorithmicindent} \textbf{Input:}
An instance of \msr and weights $\coef_i$, $i=1,\ldots,\nfsms$
\begin{algorithmic}[1]
\State $\rk \gets \emptyseq, j \gets 1$
\While{$j \le |\V|$}
 \State $\rsd_j \gets \{ \fsm_i \in \fsms: \cost(\rk) < \bg_i \}$ \Comment{ set of unsaturated functions}
 \State $v^* \gets \arg\max\limits_{v \in \V \setminus \pi} \left\{ \frac{1}{\cost(v)} \sum\limits_{\fsm_i \in \rsd_j: \cost(\rk)+\cost(v) \le \bg_i} \coef_{i} \fsm_i(v \mid \rk) \right\} $
 \Comment{ ties broken arbitrarily}
 \State $\rk \gets \rk \oplus v^*$  \Comment{ append $v^*$ at the end of sequence \rk}
 \State $j \gets j + 1$
\EndWhile
\State \textbf{Return} \rk
\end{algorithmic}
\end{algorithm}

\subsection{Unweighted greedy}
\label{section:greedy}
We show that the unweighted greedy algorithm ($\coef_i=1$) 
achieves a 2-approximation guarantee for the \msr problem with uniform cost.
In addition, we show that the approximation ratio is tight for this particular algorithm.

\begin{theorem}
\label{thm:greedy}
\greedy $($Algorithm~\ref{alg} with coefficients $\coef_{i}=1$$)$ 
is a 2-approximation algorithm for the \msr problem with uniform item costs $(\cost(\cdot)=1)$.
\end{theorem}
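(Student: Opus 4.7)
The plan is to establish $\OPT \le 2\cdot\greedy$ by a standard three-step chain: decompose each side function-by-function, use submodularity to compare $f_i$ on the optimum prefix against $f_i$ on the greedy prefix, and then invoke the greedy selection rule to charge the residual terms back to $\greedy$.

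First I would fix notation. Let $v_1,\ldots,v_\nV$ be the ranking produced by \greedy, let $G_j=\{v_1,\ldots,v_j\}$, and let $o_1,\ldots,o_\nV$ be the optimal ranking with prefixes $O_j=\{o_1,\ldots,o_j\}$. Since costs are uniform, $\idx_i=\bg_i=:b_i$ and a function $\fsm_i$ is still in $\rsd_j$ precisely when $j\le b_i$. Telescoping gives the clean identity
\[
\greedy \;=\; \sum_i \fsm_i(G_{b_i}) \;=\; \sum_{k=1}^{\nV}\sum_{i:\,k\le b_i}\fsm_i(v_k\mid G_{k-1}),
\]
and similarly $\OPT=\sum_i \fsm_i(O_{b_i})$.

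Next, for each $i$ I would write $\fsm_i(O_{b_i})\le \fsm_i(O_{b_i}\cup G_{b_i})$ by monotonicity and then telescope by adding the elements of $O_{b_i}$ one by one to $G_{b_i}$; submodularity bounds each marginal by $\fsm_i(o_k\mid G_{b_i})$, yielding
\[
\fsm_i(O_{b_i}) \;\le\; \fsm_i(G_{b_i}) + \sum_{k=1}^{b_i}\fsm_i(o_k\mid G_{b_i}).
\]
Summing over $i$ gives $\OPT \le \greedy + R$, where $R=\sum_i\sum_{k=1}^{b_i}\fsm_i(o_k\mid G_{b_i})$. Using submodularity once more ($G_{k-1}\subseteq G_{b_i}$ for every $k\le b_i$) and swapping the order of summation,
\[
R \;\le\; \sum_{k=1}^{\nV}\sum_{i:\,k\le b_i}\fsm_i(o_k\mid G_{k-1}).
\]

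The final step is the greedy comparison. Fix an index $k$. If $o_k\notin G_{k-1}$, then $o_k$ is a candidate item at iteration $k$ and is feasible for every $i$ with $k\le b_i$, so the choice of $v_k$ in Algorithm~\ref{alg} (with $\coef_i=1$) gives
\[
\sum_{i:\,k\le b_i}\fsm_i(o_k\mid G_{k-1}) \;\le\; \sum_{i:\,k\le b_i}\fsm_i(v_k\mid G_{k-1}).
\]
If instead $o_k\in G_{k-1}$, each marginal on the left vanishes and the inequality is trivial. Summing over $k$ yields $R\le \greedy$, and combining with the previous display gives $\OPT\le 2\,\greedy$.

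I do not expect a real obstacle here; the only point that needs care is the case $o_k\in G_{k-1}$, which is handled by the marginal-zero observation, and the verification that in the uniform-cost regime feasibility reduces exactly to the unsaturation condition $k\le b_i$ used implicitly in the greedy comparison. Everything else is bookkeeping with monotonicity and submodularity.
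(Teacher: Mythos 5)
Your proposal is correct and is essentially the paper's own argument read in the opposite direction: the same telescoping decomposition of the greedy value, the same substitution of $o_k$ into the greedy inequality at iteration $k$, and the same two applications of submodularity plus monotonicity to recover $\OPT-\greedy$. The only (welcome) addition is your explicit handling of the case $o_k\in G_{k-1}$, which the paper passes over silently.
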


\begin{proof}
Write $\rsd_j=\{ \fsm_i \in \fsms: \cost(\rk_{j-1}) < \bg_i \}$.
By the greedy selection criteria, 
we get that for arbitrary item $v \in \V$ in the $j$-th iteration it holds that
\begin{equation}
\label{eq:greedybest}
    \sum_{\fsm_i \in \rsd_{j}} \left( \fsm_i(\rk_{j}) - \fsm_i(\rk_{j-1}) \right) \ge 
    \sum_{\fsm_i \in \rsd_{j}} \fsm_i(v \mid \rk_{j-1}).
\end{equation}
The main idea of the proof is to choose an appropriate item $v$ 
for the above inequality at each iteration of the greedy, 
and sum over all iterations.
We denote the $j$-th item of the optimal permutation $\rkopt$ by $\vopt_j$. 
We write \ALG to denote the value achieved by the \greedy algorithm. 
Then
\begin{align*}
    \ALG & = \sum_{\fsm_i \in \fsms} \fsm_i(\rk_{\bg_i}) \\    
	& = \sum_{\fsm_i \in {\fsms}} \sum_{j=1}^{\bg_i} \left( \fsm_i(\rk_{j}) - \fsm_i(\rk_{j-1}) \right) && \triangleright\text{telescoping series} \\    
	& = \sum_{j=1}^{\nV} \sum_{\fsm_i \in \rsd_{j}} \left( \fsm_i(\rk_{j}) - \fsm_i(\rk_{j-1}) \right) &&  \\ 
    & \ge 
    \sum_{j=1}^{\nV} \sum_{\fsm_i \in \rsd_{j}} \fsm_i(\vopt_{j} \mid \rk_{j-1}) && \triangleright\text{Equation~(\ref{eq:greedybest})} \\    
	& = \sum_{\fsm_i \in {\fsms}} \sum_{j=1}^{\bg_i} \fsm_i(\vopt_{j} \mid \rk_{j-1}) \\    
    &\ge 
    \sum_{\fsm_i \in \fsms} \sum_{j=1}^{\bg_i} \fsm_i(\vopt_{j} \mid \rk_{\bg_i}) 
    && \triangleright\text{submodularity} \\    
    &\ge 
    \sum_{\fsm_i \in \fsms}
    \left( \fsm_i(\rkopt_{\bg_i} \cup \rk_{\bg_i}) - \fsm_i(\rk_{\bg_i}) \right)
    && \triangleright\text{submodularity} \\
    & \ge \sum_{\fsm_i \in \fsms}
    \left( \fsm_i(\rkopt_{\bg_i}) - \fsm_i(\rk_{\bg_i}) \right)
    && \triangleright\text{monotonicity} \\
	& = \OPT - \ALG.
 \end{align*}
Consequently, $2\ALG \geq \OPT$, proving the claim.\qed
\end{proof}

We complete the analysis of the \greedy algorithm for the \msr variant with cardinality constraints, 
by showing that the approximation ratio 2 is tight.

\begin{remark}
\label{remark:tight-2approx}

\greedy $($Algorithm~\ref{alg} with coefficients $\coef_{i}=1$$)$ 
cannot do better than 2-approximation 
for the \msr problem with uniform item costs $(\cost(\cdot)=1)$.
\end{remark}

\begin{proof}
We construct an instance where the algorithm returns $\ALG = \frac12 \OPT$.
The main idea is to force the algorithm to pick up items that are only beneficial to functions with large budget and ``starve'' those with small budget in the early iterations.
Consider functions $\fsm_i$ with budget $\bg_i=i$, for all $i \in [\nfsms]$.
Let $\nfsms=\nV$ be even, that is $\nfsms=\nV = 2k$ for some $k$.
Select $\epsilon > 0$.
For $i \leq k$, we define $\fsm_i(\rk) = \min\{1, I[v_i \in \rk] + \epsilon I[v_{i + k} \in \rk]\}$,
where $I[\cdot]$ is the indicator function.
For $i > k$, we define $\fsm_i(\rk) = I[v_i \in \rk]$.

Clearly every $\fsm_i$ is non-decreasing and submodular.
One possible optimal permutation is $\rkopt=(v_1,\ldots,v_\nV)$, which leads to $\OPT=m$.
Algorithm \ref{alg} with coefficient $\coef_i=1$ returns a permutation (out of many equivalent possible permutations) $\rk=(v_\nV,\ldots,v_1)$ with $\ALG=(1+\epsilon) m/2$.
By letting $\epsilon$ be arbitrarily small, we see that the bound in Theorem~\ref{thm:greedy} is tight.\qed
\end{proof}

\subsection{Weighted greedy}
\label{section:wgreedy}

Inspired by the instance that yields the tight bound in Remark \ref{remark:tight-2approx}, 
it is reasonable to let the algorithm favor functions with small budget at the early iterations.
Such a strategy is desirable as it in some sense suggests fairness in resource allocation, 
\revise{i.e., more functions can afford at least one item from the returned ranking}.
It also turns out to have better performance in experiments.
We show that such a strategy is indeed reliable by proving a constant-factor approximation guarantee.

\begin{theorem}
\label{thm:weighted-greedy}
\wgreedy $($Algorithm~\ref{alg} with coefficients $\coef_{i}=1/\bg_i$$)$ 
is a 3-approximation algorithm for the \msr problem with uniform item costs $(\cost(\cdot)=1)$.
\end{theorem}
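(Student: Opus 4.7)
The proof plan adapts the telescoping-and-summation argument of Theorem~\ref{thm:greedy} but replaces the uniform sum of per-iteration greedy inequalities by a weighted sum with iteration-dependent multipliers. I would begin with the weighted greedy selection inequality at iteration $j$: since the algorithm picks $v^*$ maximizing $\sum_{\fsm_i \in \rsd_j} \frac{1}{\bg_i} \fsm_i(v \mid \rk_{j-1})$, substituting the test item $v = \vopt_j$ (valid even when $\vopt_j \in \rk_{j-1}$, in which case it holds trivially by monotonicity) yields
\begin{equation*}
    \sum_{\fsm_i \in \rsd_j} \frac{1}{\bg_i}\bigl(\fsm_i(\rk_j) - \fsm_i(\rk_{j-1})\bigr) \ge \sum_{\fsm_i \in \rsd_j} \frac{1}{\bg_i}\fsm_i(\vopt_j \mid \rk_{j-1}).
\end{equation*}
The idea is then to multiply the $j$-th inequality by a nonnegative multiplier $\lambda_j$ and sum over $j \in [\nV]$.

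After interchanging the order of summation, the left-hand side becomes $\sum_{\fsm_i} \frac{1}{\bg_i} \sum_{j=1}^{\bg_i} \lambda_j (\fsm_i(\rk_j) - \fsm_i(\rk_{j-1}))$. Abel's summation by parts, combined with monotonicity of $\fsm_i$, bounds this by $\sum_{\fsm_i} \frac{\lambda_{\bg_i}}{\bg_i} \fsm_i(\rk_{\bg_i})$ whenever $\lambda_j$ is nondecreasing; choosing $\lambda_j \le j$ ensures $\lambda_{\bg_i} \le \bg_i$, so the left-hand side is at most $\sum_{\fsm_i} \fsm_i(\rk_{\bg_i}) = \ALG$. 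On the right-hand side, a similar interchange gives $\sum_{\fsm_i} \frac{1}{\bg_i} \sum_{k=1}^{\bg_i} \lambda_k \fsm_i(\vopt_k \mid \rk_{k-1})$. Using submodularity ($\rk_{k-1} \subseteq \rk_{\bg_i}$) together with the telescoping bound $\sum_{k=1}^{\bg_i} \fsm_i(\vopt_k \mid \rk_{k-1}) \ge \fsm_i(\rkopt_{\bg_i}) - \fsm_i(\rk_{\bg_i})$, the aim is to arrange for each term $\fsm_i(\vopt_k \mid \rk_{k-1})$ to accumulate a coefficient of at least $1/2$ after summation, which would yield a right-hand side of at least $(\OPT - \ALG)/2$. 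Chaining $\ALG \ge \mathrm{LHS} \ge \mathrm{RHS} \ge (\OPT - \ALG)/2$ then rearranges into $3\ALG \ge \OPT$.

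The main obstacle is the joint choice of multipliers $\lambda_j$ that simultaneously enforces the left-hand-side bound and the $\ge 1/2$-coefficient condition on the right-hand side, which is delicate when the budgets $\bg_i$ vary widely. A naive pick such as $\lambda_j = j$ establishes the left-hand-side inequality but leaves the right-hand-side coefficients $k/\bg_i$ possibly below $1/2$ when $\bg_i \gg k$. A plausible remedy is to apply the greedy inequality several times per step with different test items $\vopt_k$, using fractional weights tuned so that each $\fsm_i(\vopt_k \mid \rk_{k-1})$ accumulates the required coefficient across the steps $j \le k$ on which it is a valid test item (by submodularity, $\fsm_i(\vopt_k \mid \rk_{j-1}) \ge \fsm_i(\vopt_k \mid \rk_{k-1})$ for $j \le k$). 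Formalizing this fractional-weight bookkeeping, and verifying its feasibility across heterogeneous budgets, is the technical crux of the $3$-approximation bound.
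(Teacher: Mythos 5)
Your overall architecture matches the paper's proof exactly: weighted greedy inequalities with iteration-dependent multipliers $\lambda_j \le j$, Abel summation to bound the left-hand side by $\ALG$, and a fractional redistribution of each step's weight among test items $\vopt_k$ so that every marginal $\fsm_i(\vopt_k\mid\cdot)$ accumulates a coefficient of at least $1/2$, yielding $\ALG \ge \mathrm{LHS} \ge \mathrm{RHS} \ge (\OPT-\ALG)/2$. The gap is in the one step you defer as the ``technical crux'': the direction in which you propose to accumulate the weight for $\vopt_k$ --- over steps $j \le k$, so as to compare marginals against $\rk_{k-1}$ via $\fsm_i(\vopt_k \mid \rk_{j-1}) \ge \fsm_i(\vopt_k \mid \rk_{k-1})$ --- cannot be made to work. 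For a function $\fsm_i$ with $k \le \bg_i$ the condition you need is $\frac{1}{\bg_i}\sum_{j \le k}\mu_{jk} \ge \frac{1}{2}$, i.e., $\sum_{j\le k}\mu_{jk}\ge\bg_i/2$, where $\mu_{jk}$ is the weight placed on test item $\vopt_k$ at step $j$; but the total weight available at steps $1,\ldots,k$ is $\sum_{j\le k}\lambda_j\le k(k+1)/2$, which is smaller than $\bg_i/2$ whenever $\bg_i > k(k+1)$. Already for $k=1$ and any $\bg_i\ge 3$ the requirement $\mu_{11}\ge\bg_i/2$ contradicts $\lambda_1\le 1$, and weakening $\lambda_1\le 1$ breaks your left-hand-side bound.

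The fix, which is what the paper does, is to spread the weight for $\vopt_k$ over the \emph{later} steps $j\ge k$ and to move the submodularity reference set from $\rk_{k-1}$ to $\rk_{\bg_i}$: this is legitimate because $\rk_{j-1}\subseteq\rk_{\bg_i}$ for every $j\le\bg_i$, and the telescoping bound $\sum_{k\le\bg_i}\fsm_i(\vopt_k\mid\rk_{\bg_i})\ge\fsm_i(\rkopt_{\bg_i}\cup\rk_{\bg_i})-\fsm_i(\rk_{\bg_i})\ge\fsm_i(\rkopt_{\bg_i})-\fsm_i(\rk_{\bg_i})$ still holds. Concretely, taking $\dup_{jk}=1/2$ for $k<j$ and $\dup_{jj}=(j+1)/2$ gives $\sum_{k\le j}\dup_{jk}=j=\lambda_j$, while for each function the accumulated coefficient is $\frac{1}{\bg_i}\sum_{j=k}^{\bg_i}\dup_{jk}=\frac{\bg_i+1}{2\bg_i}>\frac{1}{2}$: the large denominator $\bg_i$ is compensated precisely because there are $\bg_i-k+1$ later steps at which $\fsm_i$ is still unsaturated, each contributing weight $1/2$. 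With that substitution the rest of your chain goes through verbatim and gives $3\ALG\ge\OPT$.
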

\begin{proof}
Write $\rsd_j=\{ \fsm_i \in \fsms: \cost(\rk_{j-1}) < \bg_i \}$.
By the greedy selection criteria, 
we know that for an arbitrary item $v \in \V$ it holds that
\begin{equation}
\label{eq:greedyweighted}
    \sum_{\fsm_i \in \rsd_{j}} \coef_i (\fsm_i(\rk_{j}) - \fsm_i(\rk_{j-1})) \ge 
    \sum_{\fsm_i \in \rsd_{j}} \coef_i \fsm_i(v \mid \rk_{j-1}).
\end{equation}
We denote by $\vopt_j$ the $j$-th item of the optimal permutation \rkopt.
The idea is to replace the arbitrary item $v$ with $\vopt_k \in \rkopt$ and compute a weighted sum.
In order to define the weights,
given $k < j$, we write $\dup_{jk} = 1/2$, and $\dup_{jj} = (j + 1)/2$.
Immediately, $\sum_{k \in [j]} \dup_{jk} = (j - 1)/2 + (j + 1)/2 = j$.

Now Equation~(\ref{eq:greedyweighted}) implies
\begin{align*}
	\sum_{\fsm_i \in \fsms} \sum_{j \in [\bg_i]} j \coef_i (\fsm_i(\rk_{j}) - \fsm_i(\rk_{j-1})) 
	& = \sum_{j \in [\nV]} j \sum_{\fsm_i \in \rsd_{j}} \coef_i (\fsm_i(\rk_{j}) - \fsm_i(\rk_{j-1})) \\
	& = \sum_{j \in [\nV]} \sum_{k \in [j]} \dup_{jk} \sum_{\fsm_i \in \rsd_{j}} \coef_i (\fsm_i(\rk_{j}) - \fsm_i(\rk_{j-1})) \\
	& \ge \sum_{j \in [\nV]} \sum_{k \in [j]} \dup_{jk} \sum_{\fsm_i \in \rsd_{j}} \coef_i \fsm_i(\vopt_{k} \mid \rk_{j-1}). \\
\end{align*}

We will denote the left hand side of the above equation by LHS, and the right hand side by RHS.
We will first bound the RHS. 
In order to do so, we need an additional bound on the weights $\dup_{jk}$, namely, for any fixed $k$,
\begin{equation}
\label{eq:weights}
	\sum_{j = k}^{b} \frac{\dup_{jk}}{b} = \frac{k + 1}{2b} + \frac{b - k}{2b} = \frac{b + 1}{2b} > \frac{1}{2}.
\end{equation}

We can now bound the right hand side with
\begin{align*}
	\text{RHS} 
	&= \sum_{\fsm_i \in \fsms} \sum_{j \in [\bg_i]} \sum_{k \in [j]} \dup_{jk} \coef_i \fsm_i(\vopt_{k} \mid \rk_{j-1}) 
	\\
	&= \sum_{\fsm_i \in \fsms} \sum_{k \in [\bg_i]} \sum_{j = k}^{\bg_i} \dup_{jk} \coef_i \fsm_i(\vopt_{k} \mid \rk_{j-1}) 
	\\
	&\ge \sum_{\fsm_i \in \fsms} \sum_{k \in [\bg_i]}  \sum_{j = k}^{\bg_i} \dup_{jk} \coef_i \fsm_i(\vopt_{k} \mid \rk_{\bg_i}) 
	&& \triangleright\text{submodularity} \\
	&\ge \sum_{\fsm_i \in \fsms} \sum_{k \in [\bg_i]} \fsm_i(\vopt_{k} \mid \rk_{\bg_i}) /2 
	&& \triangleright\text{Equation~(\ref{eq:weights})}
	\allowdisplaybreaks
	\\
	&\ge \sum_{\fsm_i \in \fsms} (\fsm_i(\rkopt_{\bg_i} \cup \rk_{\bg_i}) - \fsm_i(\rk_{\bg_i})) /2 
	&& \triangleright\text{submodularity}
	\\
	&\ge \sum_{\fsm_i \in \fsms} (\fsm_i(\rkopt_{\bg_i}) - \fsm_i(\rk_{\bg_i})) /2 
	&& \triangleright\text{monotonicity}
	\\
	&= (\OPT - \ALG) /2.
\end{align*}

Now we consider the left hand side,
\begin{align*}
    \text{LHS} 
    &=
    \sum_{\fsm_i \in \fsms} \sum_{j \in [\bg_i]}
    \frac{j}{\bg_i} (\fsm_i(\rk_{j}) - \fsm_i(\rk_{j-1}))
    \\
    &=
    \sum_{\fsm_i \in \fsms} \left(
    \frac{\bg_i}{\bg_i} \fsm_i(\rk_{\bg_i})
    - \sum_{j < \bg_i} \frac{j+1-j}{\bg_i} \fsm_i(\rk_{j}) \right)
    \allowdisplaybreaks
    \\
    &\le 
    \sum_{\fsm_i \in \fsms} \fsm_i(\rk_{\bg_i})
	\\
	&= \ALG.
\end{align*}

Putting everything together,
$
\ALG \ge \text{LHS} \ge \text{RHS} \ge (\OPT - \ALG) /2,
$
and we obtain $3\ALG \ge \OPT$.\qed
\end{proof}

\section{Knapsack constraints}
\label{section:knapsack}

The traditional way of handling knapsack constraints is to adopt a 
cost-efficient variant of the greedy algorithm
where in each iteration we select the item with the largest ratio between utility and cost.
Furthermore, we compute a second solution by selecting the maximum-utility singleton item that is feasible.
The idea is to use the second solution to rescue the situation in which the greedy algorithm 
starts with some cost-efficient small items and then is ``starved'' 
(i.e., the remaining budget is not enough to admit another valuable large item).
This idea however falls short when it comes to the \msr problem.
The reason is that there are multiple knapsacks 
and each one of them may be ``starved'' by different big items.
A more sophisticated way is needed to compute an alternative second solution.

We now discuss our proposed method in more detail.
First, an item $v \in \V$ is called \emph{large} 
with respect to a function $\fsm_i \in \fsms$ 
if its cost is more than half of the budget~$\bg_i$, that is, $2\cost(v) > \bg_i$.
It is obvious that a function $\fsm_i$ can afford at most one large item.
The following variant of the \msr problem targets a similar objective to that of \msr, 
but exclusive to only large items.
\begin{problem}[Max-submodular ranking of large items (\msrl)]
\label{problem:msrl}
Given a set of items \V, 
a set of non-decreasing and submodular functions $\fsms=\{ \fsm_1, \ldots, \fsm_\nfsms \}$, 
a cost function $\cost: \V \to \real_+$, 
and non-negative budgets $\bg_i$ for each function $\fsm_i$, 
the \msrl problem aims to find a permutation $\rk \in \perm(\V)$ that maximizes
\begin{align}
\fdp(\rk) = \sum_{v_j \in \rk} \fdp(v_j, \cost(\rk_{j-1}))
= \sum_{v_j \in \rk} \sum_{\fsm_i \in \fsms(v_j;\rk)} \fsm_i(v_j),
\label{eq:obj-dp}
\end{align}
where $\fsms(v_j;\rk)$ is the set of functions that take the $j$-th item $v_j \in \rk$ as a large item, i.e.,
$\fsms(v_j;\rk) = \{ \fsm_i \in \fsms: 2\cost(v_j) > \bg_i, \cost(\rk_{j}) \le \bg_i \}$,
and $\fdp(v_j, \cost)$ is defined to be the contribution of item $v_j$ by appending it to a prefix with cost $\cost$.
\end{problem}

We start by proving that the cost-efficient greedy algorithm yields a 3-approxima\-tion 
when there is no large item in \rkopt.
Next, we devise a dynamic programming (\DP) algorithm in Algorithm~\ref{alg-dp} to approximately solve \msrl.
Finally, we prove that the best solution among the greedy solution and the \DP solution 
can achieve an approximation guarantee that is arbitrarily close to~4.

\paragraph{Step 1: bounding small items in \rkopt.}
We first discuss the case in the absence of large items in \rkopt.
Let us introduce some notation.
We denote the $j$-th selected item by our algorithm by $u_j$.
We denote the $k$-th item of the optimal permutation $\rkopt$ by $\vopt_k$.
We denote the greedy solution of Algorithm \ref{alg} with coefficient $\coef_{i}=1$ by $\ALG_1$ and the \DP solution of Algorithm \ref{alg-dp} by $\ALG_2$.

The next theorem shows that, if every function $\fsm_i$ includes no such large item in \rkopt, $\ALG_1$ ensures a constant-factor guarantee.
Otherwise, we have an additional term $\fdp(\rkopt)$, which we will bound later.
\begin{theorem}
\label{theorem:knapsack-no-large-item}
The greedy algorithm yields $3\ALG_1 + \fdp(\rkopt) \ge \OPT$.
\end{theorem}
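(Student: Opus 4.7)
My plan is to follow the structure of the cardinality-case argument of Theorem~\ref{thm:greedy}, with two adjustments for the knapsack setting: isolating the contribution of large items in $\rkopt$ so that the $\fdp(\rkopt)$ term absorbs them, and replacing the uniform-cost greedy inequality with its ratio-based counterpart. First, I would decompose $\OPT$. For each function $\fsm_i$, denote by $\rkopt^{(i)}$ the maximal $\bg_i$-feasible prefix of $\rkopt$, let $w_i$ be its unique large item if one exists (with $\fsm_i(w_i)=0$ by convention otherwise), and set $Y^{(i)} = \rkopt^{(i)} \setminus \{w_i\}$, so that every $v \in Y^{(i)}$ is small for $\fsm_i$ (cost at most $\bg_i/2$). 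Since each $\fsm_i$ is monotone submodular with $\fsm_i(\emptyset)=0$, it is subadditive, giving $\fsm_i(\rkopt^{(i)}) \le \fsm_i(w_i) + \fsm_i(Y^{(i)})$. Summing over $i$ and using $\sum_i \fsm_i(w_i) = \fdp(\rkopt)$, which holds by the definition of $\fdp$, the theorem reduces to the inequality $\sum_{\fsm_i \in \fsms} \fsm_i(Y^{(i)}) \le 3\ALG_1$.

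Next I would mimic the telescoping and double-counting step of Theorem~\ref{thm:greedy}. The identity $\ALG_1 = \sum_j g(u_j, j)$, with $g(u_j, j) = \sum_{\fsm_k \in \rsd_j(u_j)} \fsm_k(u_j \mid \rk_{j-1})$, remains valid because $u_j$ contributes to $\fsm_k$'s greedy value exactly when $\fsm_k \in \rsd_j(u_j)$. The cost-efficient greedy rule then yields, for every candidate $v$ and every $\fsm_i$ with $v$ feasible for $\fsm_i$ at iteration $j$, the bound $\fsm_i(v \mid \rk_{j-1}) \le \cost(v) \cdot r_j$, where $r_j = g(u_j, j)/\cost(u_j)$ is the greedy's maximum ratio at step $j$. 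The crucial observation is that each $v \in Y^{(i)}$, being small for $\fsm_i$, remains feasible for $\fsm_i$ at every iteration $j$ with $\cost(\rk_{j-1}) \le \bg_i/2$. Letting $\tau_i$ denote the first iteration at which greedy's cumulative cost surpasses $\bg_i/2$, I would apply the ratio bound at $\tau_i$, sum over $v \in Y^{(i)} \setminus \rk_{\tau_i-1}$, and use $\cost(Y^{(i)}) \le \bg_i$ together with submodularity and monotonicity to obtain $\fsm_i(Y^{(i)}) \le \fsm_i(\rk_{\idx_i}) + r_{\tau_i}\bg_i$, whose first summand aggregates across $i$ to $\ALG_1$.

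The main obstacle is then bounding $\sum_i r_{\tau_i}\bg_i$ by $2\ALG_1$. The plan combines two ingredients: the definitional bound $\bg_i < 2\cost(\rk_{\tau_i})$, which gives $r_{\tau_i}\bg_i < 2 r_{\tau_i}\cost(\rk_{\tau_i})$, and the monotonicity $r_{j+1} \le r_j$, which follows because both the residual sets $\rsd_j(v)$ and each marginal $\fsm_k(v \mid \rk_j)$ are non-increasing in $j$. Together these imply $r_{\tau_i}\cost(\rk_{\tau_i}) \le \sum_{j \le \tau_i} g(u_j, j)$, and the remaining step is a careful exchange of summation that controls the multiplicity with which each greedy-iteration gain $g(u_j, j)$ is charged across the different functions $\fsm_i$; this multiplicity-control argument, leveraging that the thresholds $\tau_i$ are pinned by $\bg_i$ and greedy's cost trajectory, is the technically delicate piece of the proof and where I expect the details to require the most care.
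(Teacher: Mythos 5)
Your opening reduction is sound and in fact cleaner than the paper's: splitting each optimal prefix $\rkopt^{(i)}$ into its (at most one) large item $w_i$ and the small remainder $Y^{(i)}$, and using subadditivity to get $\OPT \le \fdp(\rkopt) + \sum_i \fsm_i(Y^{(i)})$, isolates the $\fdp(\rkopt)$ term up front, whereas the paper peels the large items off at the end of its chain of inequalities by restricting to the small-item index set $S_i$ and absorbing the rest into $\fdp(\rkopt)/2$. Your intermediate bound $\fsm_i(Y^{(i)}) \le \fsm_i(\rk_{\idx_i}) + r_{\tau_i}\bg_i$ is also correct: every item of $Y^{(i)}$ is still feasible for $\fsm_i$ at iteration $\tau_i$ by exactly the argument of Observation~\ref{obs:feasible-vk}, and the greedy ratio $r_j$ is indeed non-increasing in $j$.

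However, the step you defer --- controlling the multiplicity in $\sum_i r_{\tau_i}\bg_i$ --- is not merely delicate; it is where the argument breaks. Following your own chain, $\sum_i r_{\tau_i}\bg_i \le 2\sum_i\sum_{j\le\tau_i} r_j\cost(u_j) = 2\sum_j g(u_j,j)\,\bigl|\{i : \tau_i \ge j\}\bigr|$, and the multiplicity $\bigl|\{i : \tau_i \ge j\}\bigr|$ counts the functions whose half-budget is not yet exhausted at iteration $j$; for $j=1$ it equals $\nfsms$, so this route only yields $\sum_i r_{\tau_i}\bg_i \le 2\nfsms\,\ALG_1$, not $2\ALG_1$. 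The root cause is structural: you invoke the cost-efficient greedy inequality once per function $i$, keep only the single term $\fsm_i(v\mid\rk_{\tau_i-1})$ from its right-hand side, yet charge the entire left-hand side $r_{\tau_i}$ to that one function, so re-aggregating over $i$ reuses each iteration's full gain up to $\nfsms$ times. The paper's proof avoids this by never splitting the greedy inequality per function: it keeps the sum over functions intact on both sides and instead splits each iteration $j$ \emph{fractionally} across the optimal items, with weights $\dup_{jk}=\len(A_j\cap B_k)$ given by the overlap of the greedy cost interval $A_j$ with the half-scaled optimal cost interval $B_k$. These weights satisfy $\sum_k\dup_{jk}=\dup_j\le\cost(u_j)$, so each iteration is charged with total coefficient at most one (giving the upper bound by $\ALG_1$), and $\sum_j\dup_{jk}=\cost(\vopt_k)/2$, so each small optimal item is collected exactly one half time (giving the lower bound $(\OPT-\ALG_1-\fdp(\rkopt))/2$). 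To complete your proof you would need an analogous fractional charging of iterations to optimal items; a single threshold iteration $\tau_i$ per function cannot be made to work.
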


The proof relies on the next technical observation.
\begin{observation}
\label{obs:feasible-vk}
For any $k$, if item $\vopt_k \in \rkopt$ is feasible and not large for function $\fsm_i$, i.e., $\cost(\rkopt_{k}) \le \bg_i$ and $2\cost(\vopt_k) \le \bg_i$, 
then at the $j$-th greedy iteration such that $\cost(\rk_{j-1}) \le \cost(\rkopt_{k})/2$, 
we have $\cost(\rk_{j-1}) + \cost(\vopt_k) \le \bg_i$.
\end{observation}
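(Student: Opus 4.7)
The observation is essentially a chain of two simple bounds, so the plan is short. My approach is to simply add the two assumed inequalities and plug in the remaining hypothesis.

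First, I would take the hypothesis $\cost(\rk_{j-1}) \le \cost(\rkopt_{k})/2$ as the starting point and write
\[
\cost(\rk_{j-1}) + \cost(\vopt_k) \le \tfrac{1}{2}\cost(\rkopt_{k}) + \cost(\vopt_k).
\]
Then I would use the hypothesis $\cost(\rkopt_{k}) \le \bg_i$ to bound $\tfrac{1}{2}\cost(\rkopt_{k}) \le \bg_i/2$, and the hypothesis $2\cost(\vopt_k) \le \bg_i$ to bound $\cost(\vopt_k) \le \bg_i/2$. Combining these two yields $\cost(\rk_{j-1}) + \cost(\vopt_k) \le \bg_i/2 + \bg_i/2 = \bg_i$, which is the claim.

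There is essentially no obstacle here: the two halves of the budget are exactly paid for, one by the ``not large'' hypothesis on $\vopt_k$ (which gives $\cost(\vopt_k) \le \bg_i/2$), and the other by the feasibility hypothesis on $\rkopt_{k}$ together with the assumption on the iteration index $j$ (which together yield $\cost(\rk_{j-1}) \le \bg_i/2$). The observation is designed precisely so that these two halves add up to the full budget $\bg_i$, and no submodularity or greedy selection reasoning enters the proof. I would present it as a two-line display, noting after each inequality which hypothesis justifies it.
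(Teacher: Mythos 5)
Your proof is correct and is exactly the argument in the paper: bound $\cost(\rk_{j-1}) \le \cost(\rkopt_k)/2 \le \bg_i/2$ using feasibility, bound $\cost(\vopt_k) \le \bg_i/2$ using the ``not large'' hypothesis, and add. Nothing to change.
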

\begin{proof}
The proof is straightforward by combining 
$\cost(\rk_{j-1}) \le \cost(\rkopt_{k})/2 \le \bg_i/2$ and 
$\cost(\vopt_k) \le \bg_i/2$.
\qed
\end{proof}

\begin{proof}[of Theorem~\ref{theorem:knapsack-no-large-item}]
Write $\rsd_j=\{ \fsm_i \in \fsms: \cost(\rk_{j-1}) < \bg_i \}$.
By greedy, we know that for arbitrary item $v \in \V$ in the $j$-th iteration it holds that
\begin{align}
    \frac{1}{\cost(u_j)} \sum_{\fsm_i \in \rsd_{j}: \cost(\rk_{j}) \le \bg_i} \fsm_i(u_j \mid \rk_{j-1}) 
    &\ge 
    \frac{1}{\cost(v)} \sum_{\fsm_i \in \rsd_{j}: \cost(\rk_{j-1})+\cost(v) \le \bg_i} \fsm_i(v \mid \rk_{j-1}).
    \label{eq:knapsack-after-double-counting}
\end{align}

To simplify the notation used in the above inequality, let us define
$X_j = \{i \in [\nfsms] \mid \cost(\rk_{j}) \le \bg_i\}$
to be the valid function indices for $\rk_j$, and similarly
$Y_{jk} = \{i \in [\nfsms] \mid \cost(\rk_{j-1})+\cost(\vopt_k) \le \bg_i\}$.

For function $\fsm_i$, we define 
$\idxopt_i = \max \{ j \in [\nV] : \cost(\rkopt_j) \le \bg_i \}$. 

Let us define a sequence of weights $\dup_j= \len(A_j)$, where the interval $A_j = (\cost(\rk_{j-1}), \cost(\rk_{j})] \cap (0, \cost(\rkopt)/2]$.

We will start by lower bounding $\ALG_1$ with
\begin{align*}
	\ALG_1
	&= \sum_{\fsm_i \in \fsms} \sum_{j \in [\idx_i]} \fsm_i(u_j \mid \rk_{j-1})  \\
	& = \sum_{j \in [\nV]} \sum_{i \in X_j} \fsm_i(u_j \mid \rk_{j-1})  &&  \\
    & \geq \sum_{j \in [\nV]} \frac{\dup_j}{\cost(u_j)} \sum_{i \in X_j} \fsm_i(u_j \mid \rk_{j-1}).
	&& \triangleright \text{since }\dup_j \le \cost(u_j) 
\end{align*}

Let us denote the right hand side with $C$. We will prove the theorem by showing that $C \geq (\OPT - \ALG_1 - \fdp(\rkopt)) /2$.

We define
$\dup_{jk} = \len(A_j \cap B_k)$, where interval $B_k = (\cost(\rkopt_{k-1})/2, \cost(\rkopt_{k})/2]$.
We see immediately that
$\dup_{j} = \len(A_j) = \sum_{k \in [\nV]}\dup_{jk}$ as $B_k$ partition $A_j$.
Similarly, $\sum_{j \in [\nV]} \dup_{jk} = \len(B_k) = \cost(\vopt_k)/2$ as $A_j$ partition $B_k$.

We first claim that for any $i$,
\begin{equation}
	\text{if}\quad j > \idx_i \quad\text{and}\quad k \leq \idxopt_i,  \quad\text{then}\quad d_{jk} = 0.
\label{eq:claim1}
\end{equation}
To prove Equation~(\ref{eq:claim1}) note that $j - 1 \geq \idx_i$ implies that $\cost(\rk_{j - 1}) \geq b_i$
while $k \leq \idxopt_i$ implies that $\cost(\rkopt_{k}) \leq b_i$. Consequently, $A_j \cap B_k = \emptyset$ and $d_{jk} = 0$.

Let us now define $S_i = \left\{ k \in  [\idxopt_i] : 2\cost(\vopt_k) \leq b_i \right\}$ to be
the set of small items for the $i$-th function.
We claim that
\begin{equation}
	\text{if}\quad k \in S_i \quad\text{and}\quad d_{jk} > 0, \quad\text{then}\quad \cost(\rk_{j-1})+\cost(\vopt_k) \le \bg_i.
	\label{eq:claim2}
\end{equation}
To prove Equation~(\ref{eq:claim2}) note that
since $k \leq \idxopt_i$, we have $\cost(\rkopt_{k}) \le \bg_i$. Moreover, since $k \in S_i$,
we have $2\cost(\vopt_k) \le \bg_i$. If $\cost(\rk_{j-1}) > \cost(\rkopt_k) / 2$, then $A_j \cap B_k = \emptyset$ and so $d_{jk} = 0$.
Thus, $\cost(\rk_{j-1}) \leq \cost(\rkopt_k) / 2$. Observation~\ref{obs:feasible-vk} now proves Equation~(\ref{eq:claim2}).

We can now lower bound $C$ with
\begin{align}
    C & =
    \sum_{j \in [\nV]} \sum_{k \in [\nV]} \frac{\dup_{jk}}{\cost(u_j)} \sum_{i \in X_j} \fsm_i(u_j \mid \rk_{j-1}) 
    && \triangleright \text{since } \dup_{j} = \sum_{k \in [\nV]}\dup_{jk}
	\nonumber\\
    &\ge 
    \sum_{j \in [\nV]} \sum_{k \in [\nV]} \frac{\dup_{jk}}{\cost(\vopt_k)} \sum_{i \in Y_{jk}} \fsm_i(\vopt_k \mid \rk_{j-1}) 
    && \triangleright \text{Equation~(\ref{eq:knapsack-after-double-counting})}
    \nonumber \\
    & = \sum_{i \in [\nfsms]} \sum_{k \in [\nV]} \sum_{j \in [\idx_i]: i \in Y_{jk}} \frac{\dup_{jk}}{\cost(\vopt_k)} \fsm_i(\vopt_k \mid \rk_{j-1})  && \nonumber \\
    & \geq \sum_{i \in [\nfsms]} \sum_{k \in S_i} \sum_{j \in [\idx_i]: i \in Y_{jk}, \dup_{jk} > 0} \frac{\dup_{jk}}{\cost(\vopt_k)} \fsm_i(\vopt_k \mid \rk_{j-1})
    \nonumber \\
    & = \sum_{i \in [\nfsms]} \sum_{k \in S_i} \sum_{j \in [\idx_i]} \frac{\dup_{jk}}{\cost(\vopt_k)} \fsm_i(\vopt_k \mid \rk_{j-1})
    && \triangleright \text{Equation~(\ref{eq:claim2})}
    \nonumber \\
    &\ge
    \sum_{i \in [\nfsms]}\sum_{k \in S_i} \sum_{j \in [\idx_i]} \frac{\dup_{jk}}{\cost(\vopt_k)} \fsm_i(\vopt_{k} \mid \rk_{\idx_i})
    && \triangleright\text{submodularity}
    \nonumber \\
    & =
    \sum_{i \in [\nfsms]}\sum_{k \in S_i} \sum_{j \in [\nV]} \frac{\dup_{jk}}{\cost(\vopt_k)} \fsm_i(\vopt_{k} \mid \rk_{\idx_i})
    && \triangleright\text{Equation~(\ref{eq:claim1})}
    \nonumber \\
	&= \sum_{i \in [\nfsms]}\sum_{k \in S_i}\fsm_i(\vopt_{k} \mid \rk_{\idx_i}) /2
    && \triangleright \text{since } \sum_{j \in [\nV]} \dup_{jk} = \cost(\vopt_k)/2
    \allowdisplaybreaks
    \nonumber \\
	&\geq -\fdp(\rkopt)/2 + \sum_{i \in [\nfsms]}\sum_{k \in [\idxopt_i]}\fsm_i(\vopt_{k} \mid \rk_{\idx_i}) /2
    \allowdisplaybreaks
    \nonumber \\
    &\ge -\fdp(\rkopt)/2 + \sum_{i \in [\nfsms]} (\fsm_i(\rkopt_{\idxopt_i} \cup \rk_{\idx_i}) - \fsm_i(\rk_{\idx_i})) /2
    && \triangleright\text{submodularity}
    \allowdisplaybreaks
    \nonumber \\
    &\ge -\fdp(\rkopt)/2 + \sum_{i \in [\nfsms]} (\fsm_i(\rkopt_{\idxopt_i}) - \fsm_i(\rk_{\idx_i})) /2
    && \triangleright\text{monotonicity}
    \allowdisplaybreaks
    \nonumber \\
    &= (\OPT - \ALG_1 - \fdp(\rkopt)) /2.
    \nonumber 
\end{align}

Putting everything together, we obtain
$\ALG_1 \ge (\OPT - \ALG_1 - \fdp(\rkopt)) /2$, that is, 
$3\ALG_1 +  \fdp(\rkopt) \ge \OPT$.
\qed
\end{proof}

\paragraph{Step 2: bounding large items in \rkopt.}

When some functions do take large items in \OPT, the quantity $\fdp(\rkopt)$ is positive, and we need to bound it.
We will do this by solving approximately the \msrl problem.

Our first result allows to order items based on their cost when solving \msrl.
\begin{theorem}
\label{theorem:monotone}
Assume a permutation $\rk$ with some item $v_i$ for which there is an index $j < i$
such that $\cost(v_j) \geq \cost(v_i)$.
Define a sub-permutation $\rk'$ by removing $v_i$.
Then $\fdp(\rk') \geq \fdp(\rk)$.
\end{theorem}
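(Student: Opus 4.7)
The plan is to decompose $\fdp(\rk) = \sum_k \fdp(v_k, \cost(\rk_{k-1}))$ into per-item contributions and then track how each one changes when $v_i$ is removed from the permutation. The crux of the argument will be the observation that the removed item $v_i$ itself contributes nothing to $\fdp(\rk)$, i.e.\ $\fsms(v_i;\rk) = \emptyset$. To prove this, I will argue by contradiction: suppose some $\fsm_{i'} \in \fsms(v_i;\rk)$. Then $2\cost(v_i) > \bg_{i'}$, and because $\cost(v_j) \ge \cost(v_i)$ also $2\cost(v_j) > \bg_{i'}$, so $v_j$ is likewise large for $\fsm_{i'}$. But $v_j$ and $v_i$ both lie in the prefix $\rk_i$ (since $j < i$), so $\cost(\rk_i) \ge \cost(v_j) + \cost(v_i) > \bg_{i'}$, contradicting $\cost(\rk_i) \le \bg_{i'}$. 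This observation---that an item dominated in cost by a preceding item cannot be the unique ``large fit'' of any function---is the main conceptual step and the one I expect to require the most thought.

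Once this is in hand, the rest is bookkeeping. For any surviving item $v_k \ne v_i$ in $\rk'$, the ``large'' condition $2\cost(v_k) > \bg_{i'}$ depends only on $\cost(v_k)$ and $\bg_{i'}$ and is therefore unaffected by the removal. The prefix-cost condition $\cost(\rk'_{k'}) \le \bg_{i'}$, where $k'$ denotes the new position of $v_k$ in $\rk'$, can only become easier to satisfy: the prefix cost is unchanged when $k < i$ and strictly smaller by $\cost(v_i)$ when $k > i$. Hence $\fsms(v_k;\rk') \supseteq \fsms(v_k;\rk)$, and so $\fdp(v_k, \cost(\rk'_{k'-1})) \ge \fdp(v_k, \cost(\rk_{k-1}))$ term by term (each $\fsm_{i'}(v_k)$ being non-negative since $\fsm_{i'}$ is non-decreasing with $\fsm_{i'}(\emptyset)=0$).

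Combining the two parts, the sum of contributions in $\rk'$ dominates the sum in $\rk$, since the omitted item $v_i$ originally contributed zero and every surviving item contributes at least as much, yielding $\fdp(\rk') \ge \fdp(\rk)$. Notice that the argument is entirely structural: it uses neither the submodularity of the $\fsm_{i'}$ nor their precise values, only monotonicity of budget-fit together with the half-budget threshold defining a large item, which is what drives the pigeonhole-style contradiction at the heart of the proof.
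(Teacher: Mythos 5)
Your proof is correct and follows essentially the same route as the paper's: show the removed item $v_i$ contributes zero because $\cost(\rk_{i-1}) + \cost(v_i) \geq 2\cost(v_i) > \bg_{i'}$ for any function where $v_i$ is large, then note that every surviving item's prefix cost can only decrease, so its set of contributing functions can only grow. The paper packages the second step as a separate observation (Observation~\ref{obs:fsms}) and states the first step directly rather than by contradiction, but these are only presentational differences.
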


The proof relies on the following technical observation.
\begin{observation}
\label{obs:fsms}
Given an item $v$ and two sequences $\rk,\rk'$ with costs $\cost(\rk) \le \cost(\rk')$, we have
$\fsms(v;\rk' \concat v) \subseteq \fsms(v;\rk \concat v)$ and 
$\fdp(v; \cost(\rk)) \ge \fdp(v; \cost(\rk'))$.
\end{observation}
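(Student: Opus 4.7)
The plan is to unfold the two pieces of notation and verify each claim directly from its definition; the result is essentially bookkeeping, and the only non-trivial ingredient is the non-negativity of each $\fsm_i(v)$, which we get from monotonicity together with the standing WLOG assumption $\fsm_i(\emptyset)=0$ made near the problem definition.

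For the first inclusion, I would apply the definition of $\fsms(\cdot;\cdot)$ from Problem~\ref{problem:msrl} to the sequence $\rk \concat v$, where $v$ occupies the last position. The set $\fsms(v;\rk\concat v)$ then consists of exactly those $\fsm_i$ satisfying (a) $2\cost(v) > \bg_i$, and (b) $\cost(\rk)+\cost(v) \le \bg_i$. Now take any $\fsm_i \in \fsms(v;\rk'\concat v)$. Condition (a) does not involve the prefix at all, so it holds automatically. For condition (b), the hypothesis $\cost(\rk) \le \cost(\rk')$ gives $\cost(\rk)+\cost(v) \le \cost(\rk')+\cost(v) \le \bg_i$, placing $\fsm_i$ in $\fsms(v;\rk\concat v)$.

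For the second inequality, I would rewrite $\fdp(v;c)$ using the expression $\sum_{\fsm_i \in \fsms(v;\rk\concat v)} \fsm_i(v)$ for any prefix $\rk$ of cost exactly $c$ (this is well-defined since the defining conditions (a), (b) depend on $\rk$ only through $\cost(\rk)$). The first part of the observation tells us that passing from $\rk$ to $\rk'$ can only shrink the index set; moreover each summand is non-negative, because $\fsm_i(v) = \fsm_i(\{v\}) - \fsm_i(\emptyset) \ge 0$ by non-decreasingness together with the normalization $\fsm_i(\emptyset)=0$. Dropping non-negative summands cannot increase the total, so $\fdp(v;\cost(\rk)) \ge \fdp(v;\cost(\rk'))$.

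There is no real obstacle here; the only point that needs a sentence of care is confirming that $\fdp(v;c)$ depends only on the numerical value $c$ and not on the particular prefix realizing that cost, so that the two halves of the statement combine cleanly. Once this is noted, both claims reduce to one-line checks from the definitions.
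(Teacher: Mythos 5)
Your proposal is correct and follows essentially the same route as the paper's proof: unfold the definition of $\fsms(v;\cdot)$, observe that $\cost(\rk)\le\cost(\rk')$ makes the feasibility condition for $\rk$ weaker, and conclude the inequality on $\fdp$ by summing non-negative terms over a smaller index set. Your extra remarks on the non-negativity of $\fsm_i(v)$ and on $\fdp(v;c)$ depending only on the cost are left implicit in the paper but do not constitute a different argument.
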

\begin{proof}
Note that
\begin{align*}
	\fsms(v;\rk \concat v) & = \{ \fsm_i \in \fsms: 2\cost(v) > \bg_i, \cost(\rk)+ \cost(v) \le \bg_i \} \\
	& \supseteq \{ \fsm_i \in \fsms: 2\cost(v) > \bg_i, \cost(\rk')+ \cost(v) \le \bg_i \} = \fsms(v;\rk' \concat v).
\end{align*}
Consequently, we have 
\[
\fdp(v; \cost(\rk)) = \sum_{\fsm_i \in \fsms(v;\rk \concat v)} \fsm_i(v) \ge \sum_{\fsm_i \in \fsms(v;\rk' \concat v)} \fsm_i(v) = \fdp(v; \cost(\rk')),
\]
proving the claim.\qed
\end{proof}

\begin{proof}[of Theorem~\ref{theorem:monotone}]
Let $v_i$ be an item that is in $\rk$ but not in $\rk'$.
Assume that $2\cost(v_i) > \bg$ for arbitrary function budget $\bg$. Then  $\cost(\rk_{i - 1}) + \cost(v_i) \geq 2 \cost(v_i) > \bg$, \revise{following the assumptions of the theorem}.
Consequently, $\fsms(v;\rk_i) = \emptyset$ and $\fdp(v_i, \cost(\rk_{i - 1})) = 0$.
Let $u_j$ be the $j$-th item in $\rk'$.  Observation~\ref{obs:fsms} now implies that
\[
	\fdp(\rk) = \sum_{v_i \in \rk} \fdp(v_i; \cost(\rk_{i-1})) = \sum_{v_i \in \rk'} \fdp(v_i; \cost(\rk_{i-1})) \leq \sum_{u_j \in \rk'} \fdp(u_j; \cost(\rk_{j-1}')) = \fdp(\rk'),
\]
proving the claim.\qed
\end{proof}

The above theorem enables a way to limit ourselves to sequences of large items with non-decreasing costs when solving \msrl.

Let us assume for simplicity that $\fdp(\cdot)$ is an integer-value in $[k]$.
We will discuss how to relax this assumption shortly.

We can solve \msrl by constructing a table \tb with entry $\tb(\val, j)$ for each value $\val \in [k]$ and each item with index $j \in [\nV]$.
We define the entry $\tb(\val, j)$ to be the lowest possible cost of a permutation using only the first $j$ items with at least value~\val,
\[
	\tb(\val, j)  = \min \{\cost(\rk) \mid \fdp(\rk) \geq \val, \pi \subseteq \revise{(}v_1, \ldots, v_j\revise{)} \}.
\]
Note that it is also possible to solve \msrl by defining a different dual \DP, where each entry $\tb(\bg, j)$ contains 
the highest value realizable by a permutation using only the first $j$ items with at most cost \bg.
However, this dual \DP is not amenable to the standard rounding trick we will introduce shortly. 

\begin{theorem}
\label{theorem:dp}
The table $\tb$ satisfies the following relation:
\begin{equation}
	\tb(\val, j) = \min\left\{\tb(\val, j-1), \min_{\val' \mid \val' + \fdp({v_j}; \tb(\val', j - 1)) \ge \val} \tb(\val', j - 1) + \cost(v_j)\right\},
	\label{eq:dp}
\end{equation}
when $j > 1$. 
Moreover, $\tb(0, 1) = 0$, $\tb(\val, 1) = \cost(v_1)$ if $0 < \val \leq \fdp(v_1)$, and $\infty$ otherwise.
\end{theorem}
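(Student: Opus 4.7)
The plan is to verify the two base cases by direct inspection and then establish the recurrence via a double-inequality argument, after reducing (by Theorem~\ref{theorem:monotone}) to permutations in non-decreasing cost order. Concretely, we may reindex so that items $v_1, \ldots, v_\nV$ appear in non-decreasing order of cost; any permutation $\rk$ that violates this order can be pruned, without decreasing $\fdp(\rk)$, to a sub-permutation that does obey it, so restricting to such permutations is WLOG.

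The base cases are immediate: the only permutations drawn from $\{v_1\}$ are $\emptyseq$ (cost $0$, value $0$) and $(v_1)$ (cost $\cost(v_1)$, value $\fdp(v_1)$), which gives $\tb(0,1)=0$, $\tb(\val,1)=\cost(v_1)$ for $0<\val\le \fdp(v_1)$, and $\tb(\val,1)=\infty$ otherwise.

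For the upper bound $\tb(\val,j)\le\text{RHS}$, each term on the right realizes an explicit feasible candidate for $\tb(\val,j)$: the first term is attained by any optimal permutation for $\tb(\val,j-1)$, which does not use $v_j$; the second is attained by taking an optimal permutation $\sigma$ for $\tb(\val',j-1)$ and appending $v_j$. Since $v_j$ has the largest cost among $v_1,\ldots,v_j$, the sequence $\sigma \oplus v_j$ is a subsequence of $(v_1,\ldots,v_j)$, has cost exactly $\tb(\val',j-1)+\cost(v_j)$, and has value at least
\[
\fdp(\sigma) + \fdp(v_j;\cost(\sigma)) \;\ge\; \val' + \fdp(v_j;\tb(\val',j-1)) \;\ge\; \val
\]
whenever $\val'$ satisfies the inner constraint. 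For the lower bound $\tb(\val,j)\ge\text{RHS}$, I would start from an optimal permutation $\pi$ realizing $\tb(\val,j)$, assumed to be in non-decreasing cost order. If $v_j\notin\pi$ then $\pi$ uses only $v_1,\ldots,v_{j-1}$ and bounds the first term on the right. Otherwise $v_j$ must appear last in $\pi$; let $\pi'$ be the prefix obtained by removing $v_j$ and set $\val'=\fdp(\pi')$. Then $\pi'$ witnesses $\tb(\val',j-1)\le\cost(\pi')$, so by Observation~\ref{obs:fsms} (monotonicity of $\fdp(v_j;\cdot)$ in its cost argument),
\[
\val' + \fdp(v_j;\tb(\val',j-1)) \;\ge\; \val' + \fdp(v_j;\cost(\pi')) \;=\; \fdp(\pi) \;\ge\; \val,
\]
so $\val'$ lies in the range of the inner minimization, and $\cost(\pi)=\cost(\pi')+\cost(v_j)\ge\tb(\val',j-1)+\cost(v_j)$.

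The main subtlety is precisely the lower-bound step: the recurrence plugs $\tb(\val',j-1)$, the minimum-cost prefix attaining value $\val'$, into $\fdp(v_j;\cdot)$, whereas the true optimal prefix $\pi'$ may have strictly larger cost and hence a smaller contribution from $v_j$. Making this substitution safe in the feasibility condition is exactly what the monotonicity in Observation~\ref{obs:fsms} delivers. Once this is in hand the rest is bookkeeping, and the constructive argument for the upper bound mirrors the destructive (case-split) argument for the lower bound.
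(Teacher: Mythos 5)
Your proof is correct and follows essentially the same route as the paper's: induction on $j$, a two-sided inequality with a case split on whether $v_j$ appears in the optimal sequence, and the crucial use of Observation~\ref{obs:fsms} to justify replacing the cost of the true optimal prefix by the (possibly smaller) table value $\tb(\val',j-1)$ inside the feasibility condition. You merely make explicit two points the paper leaves implicit, namely the reduction via Theorem~\ref{theorem:monotone} to non-decreasing cost order (so $v_j$ is last) and the feasibility argument behind the ``easy'' direction $\tb(\val,j)\le\text{RHS}$.
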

\begin{proof}
We will prove by induction.  The result holds trivially for $\tb(\val,1)$.

Next, we assume the theorem holds for all $\tb(\val', j - 1)$.
Now we examine $\tb(\val,j)$.
Let $\rk$ be a sequence responsible for $\tb(\val, j)$. Let $X$ be the value of the right hand side of Equation~\ref{eq:dp}.
\revise{Clearly, we have $X \geq \cost(\rk)$, and} we now prove the claim by showing that $X \leq \cost(\rk)$.

If $v_j$ not in \rk, then $X \leq \tb(\val, j - 1) \le \cost(\rk)$, and we are done.
If $v_j$ is in \rk, then let $\rk’$ be the permutation without $v_j$.
Let $\val' = \fdp(\rk’)$, and by the inductive hypothesis, we know that $T(\val', j - 1) \le \cost(\rk’)$. 
Then 
\[
	\val \le \fdp(\rk) = \val' + \fdp(v_j; \cost(\rk’)) \le \val' + \fdp(v_j; T(\val', j - 1)), 
\]
where the last inequality is by Observation \ref{obs:fsms}.
Therefore, according to the \DP updating rule, we have
\[
	X \le T(\val', j - 1) + \cost(v_j) \le \cost(\rk’) + \cost(v_j) = \cost(\rk),
\]
completing the proof.\qed
\end{proof}

We can use Theorem~\ref{theorem:dp} to construct $\tb$ using a dynamic program,
which is described in Algorithm \ref{alg-dp}.
Next, we will show that the \DP solves the \msrl problem.
\begin{algorithm}[t]
\caption{Dynamic program for solving \msrl}\label{alg-dp}

\begin{algorithmic}[1]

\State $\tb(\val, j) \gets \infty$ for all $\val$ and $j$
\State $\tb(\val, 1) \gets \cost(v_1)$ for all $0 < \val \le \fdp(v_1; 0)$, and $\tb(0, 1) \gets 0$

\For{$j = 2, \ldots, \nV$}
	\For{$\val$ in descending order}
		\State $\val' \gets \val + \fdp(v_j, \tb(\val, j - 1))$
		\State $\tb(\val', j) \gets \min(\tb(\val', j),  \tb(\val, j - 1) + \cost(v_j))$
	\EndFor
	\State $x \gets \infty$
	\For{$\val$ in descending order}
		\State $\tb(\val, j) \gets \min(\tb(\val, j - 1), \, \tb(\val, j), \, x)$ 
		\State $x \gets \tb(\val, j)$
	\EndFor
\EndFor
\State \textbf{Return} Permutation responsible for $\tb(\val^*, \nV)$, 
where $\val^* = \max\{\val \mid \tb(\val, \nV) < \infty\} $
\end{algorithmic}
\end{algorithm}

\begin{theorem}
\label{theorem:algdp}
Assume that $\fdp(\rk)$ is an integer in $[k]$ for every $\rk$.
\revise{The permutation \rk responsible for $\tb(\val^*, \nV)$, 
where $\val^* = \max\{\val \mid \tb(\val, \nV) < \infty\}$,
returned by Algorithm~\ref{alg-dp} has the largest $\fdp(\cdot)$ value.}
Besides, Algorithm~\ref{alg-dp} runs in $\bigO(n(k + \nfsms) + \nfsms \log \nfsms)$ time.
\end{theorem}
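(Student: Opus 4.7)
The plan is to split the argument into a correctness claim and a running-time analysis, leveraging Theorems~\ref{theorem:monotone} and~\ref{theorem:dp}.

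For correctness, I would first relabel items so that $\cost(v_1) \leq \ldots \leq \cost(v_\nV)$. By iterated application of Theorem~\ref{theorem:monotone}, any optimal permutation for \msrl can be reduced to a sub-permutation of this sorted sequence without decreasing $\fdp$, so restricting attention to $\rk \subseteq (v_1, \ldots, v_\nV)$ loses no generality. With this in place, I would prove by induction on $j$ that after the $j$-th outer iteration of Algorithm~\ref{alg-dp}, each entry $\tb(\val, j)$ equals the quantity defined in Theorem~\ref{theorem:dp}. The first inner loop handles the ``include $v_j$'' branch of Equation~(\ref{eq:dp}): for each predecessor value $\val$, it updates $\tb(\val', j) \gets \min(\tb(\val', j), \tb(\val, j-1) + \cost(v_j))$ with $\val' = \val + \fdp(v_j, \tb(\val, j-1))$. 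The second inner loop folds in the ``skip $v_j$'' branch $\tb(\val, j-1)$ and, via the running variable $x$, propagates the invariant that any permutation realizing value at least $\val + 1$ also realizes value at least $\val$, thereby completing the outer minimum in Equation~(\ref{eq:dp}). Once the table is complete, $\tb(\val, \nV) < \infty$ certifies the existence of a permutation of value at least $\val$, so the returned $\val^*$ is the maximum attainable $\fdp$ value and the witnessing permutation is optimal.

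For the running time, sorting the functions by budget takes $\bigO(\nfsms \log \nfsms)$, and the items can be sorted by cost as a lower-order preprocessing step. With budgets sorted, for each item $v_j$ I would precompute the piecewise-constant step function $\cost \mapsto \fdp(v_j, \cost)$, whose breakpoints are the values $\bg_i - \cost(v_j)$ for those $i$ with $2\cost(v_j) > \bg_i$; this takes $\bigO(\nfsms)$ per item and $\bigO(n \nfsms)$ in total. The main DP performs two sweeps of length $k$ per item, each doing constant work plus at most one query into the step function. Because the outer loop scans $\val$ in descending order and $\tb(\val, j-1)$ is non-increasing in $\val$, the query argument $\tb(\val, j-1)$ moves monotonically along the sorted breakpoint list; a single two-pointer sweep therefore evaluates all $k$ queries in amortized $\bigO(k + \nfsms)$ time per item. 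Summing contributions yields $\bigO(n(k + \nfsms) + \nfsms \log \nfsms)$.

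The principal obstacle I anticipate is reconciling the recurrence, which treats $v_j$ as the last-appended item whenever it is included, with the pre-sorting that makes appending ``last'' in a cost-sorted enumeration without loss of generality; here Theorem~\ref{theorem:monotone} must be invoked iteratively to strip any out-of-order items from a candidate optimum. A secondary delicate point is the monotonicity of $\tb(\cdot, j-1)$ in $\val$, on which the two-pointer runtime argument hinges; I would carry this as an invariant maintained across outer iterations by the second inner loop. The remaining induction and the bookkeeping for the step functions are routine.
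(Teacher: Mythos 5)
Your proposal is correct and takes essentially the same route as the paper: correctness is reduced to Theorem~\ref{theorem:dp} (with Theorem~\ref{theorem:monotone} licensing the cost-sorted item order), and the running time comes from the $k \times \nV$ table plus amortizing the $\fdp(v_j;\cdot)$ evaluations by sorting the functions by budget and sweeping the breakpoints monotonically as $\val$ descends --- exactly the paper's argument, which you spell out in more detail. One trivial slip: $\tb(\cdot, j-1)$ is non-\emph{decreasing} in $\val$ (larger value costs at least as much), so it decreases along your descending scan, which is precisely what the two-pointer argument needs.
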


\begin{proof}
The correctness of the algorithm follows directly from Theorem~\ref{theorem:dp}.
There are in total $k \times \nV$ table entries.
Note that we can avoid directly invoking $\fdp(v_j; \cdot)$, which alone needs
time $\bigO(\nfsms)$, by sorting $\fsm_i$ by their budget $\bg_i$ and gradually
including more $\fsm_i$ as $\cost(\tb(\val, j - 1))$ and $\val$ decrease. This leads to an additional $\bigO(\nfsms)$ time per index $j$.
\qed
\end{proof}

\revise{We provide a numerical example to illustrate the \DP algorithm.
\begin{example}
\label{eg:dp}
Consider two modular functions $\fsm_1,\fsm_2$ with budget $\bg_1=3,\bg_2=9$, and 
three items $v_1,v_2,v_3$ with costs $2.5, 3, 6.5$, respectively.
We define 
$\fsm_1(v_1) = 1$,
$\fsm_1(v_2) = 1.5$,
$\fsm_2(v_3) = 1$, and 
0 otherwise.

It is easy to see that both the cost-efficient greedy algorithm and the best singleton will pick item $v_2$, which leads to a sub-optimal ranking, 
while the \DP algorithm can help us find the optimal ranking.

The \DP algorithm first initializes $\tb(\val, j) \gets \infty$ for all $\val$ and $j$.
We then process items $v_1,v_2,v_3$ in non-decreasing order by their costs.
\begin{itemize}
	\item Item $v_1$:
	we set $\tb(\val, 1) = \cost(v_1)$ for all $0 < \val \le \fsm_1(v_1)$ and $\tb(0, 1) = 0$.
	\item Item $v_2$:
	we set $\tb(\val, 2) = \tb(\val, 1)$ for all $\val \le \fsm_1(v_1)$, and
	$\tb(\val, 2) = \cost(v_2)$ for all $\fsm_1(v_1) < \val \le \fsm_1(v_2)$.
	\item Item $v_3$:
	we set $\tb(\val, 3) = \tb(\val, 2)$ for all $\val \le \fsm_1(v_2)$, and
	$\tb(\val, 3) = \cost(v_1) + \cost(v_3)$ for all $\fsm_1(v_2) < \val \le \fsm_1(v_1)+\fsm_2(v_3)$.
\end{itemize}
Finally, we return the permutation $\rk = (v_1,v_3)$ responsible for $\tb(\val^*, 3)$, 
where $\val^* = \fsm_1(v_1)+\fsm_2(v_3)$.
\end{example}}

So far we have assumed that $\fdp$ is an integer. Next,
we show that with a standard rounding technique, the \DP method in Algorithm \ref{alg-dp} gives an \FPTAS for \msrl.
The idea is to apply the \DP to a rounded instance, which is obtained by first scaling and rounding down every function $\lfloor \fsm_i / K \rfloor$
for certain $K$.

\begin{theorem}
\label{theorem:fptas}
Let $P = \max_{i, v} \fsm_i(v)$, where $v$ is a large item for $\fsm_i$.
Let $K = \frac{P\minval}{\nfsms}$ \revise{for any constant $\minval > 0$}. 
Define $\fsm'_i = \lfloor \fsm_i / K \rfloor$ and let $\fdp'(\rk)$ be the score
of a permutation using $\fsm'_i$ instead of $\fsm_i$.
Let $\rk$ be the permutation with the largest $\fdp(\rk)$.
Then $K\fdp'(\rk) \geq (1 - \minval)\fdp(\rk)$.
\end{theorem}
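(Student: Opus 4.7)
My plan is to carry out the standard scale-and-round FPTAS argument, adapted to this setting. The goal is to bound the loss incurred by replacing each $\fsm_i$ with $K\lfloor \fsm_i / K \rfloor$ and then comparing the rounded score $\fdp'(\rk)$ with the original score $\fdp(\rk)$ on the optimal permutation $\rk$.

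First, I will apply the floor inequality $K\lfloor x \rfloor \ge x - K$ pointwise. For every pair $(v, \fsm_i)$ we obtain $K\fsm'_i(v) \ge \fsm_i(v) - K$. Expanding $\fdp'$ using its definition in Equation~(\ref{eq:obj-dp}) and summing these inequalities over every active (item, function) pair in $\rk$ yields
\[
    K\fdp'(\rk) \;\ge\; \fdp(\rk) - K \cdot N,
\]
where $N = \sum_{v_j \in \rk}|\fsms(v_j; \rk)|$ is the total number of such active pairs.

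Next, I will argue that $N \le \nfsms$. By definition, a function $\fsm_i$ lies in $\fsms(v_j; \rk)$ only if $v_j$ is a large item for $\fsm_i$, namely $2\cost(v_j) > \bg_i$; any two distinct large items for the same $\fsm_i$ together cost more than $\bg_i$, so they cannot both sit in any prefix of $\rk$ with cost at most $\bg_i$. Thus each $\fsm_i$ contributes to at most one set $\fsms(v_j; \rk)$, so $N \le \nfsms$, and the additive error simplifies to $K\nfsms = \minval P$.

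Finally, to turn this additive bound into the claimed multiplicative one, I will show $\fdp(\rk) \ge P$. Let $(i^*, v^*)$ realize the maximum defining $P$; since $v^*$ is a (feasible) large item for $\fsm_{i^*}$, the permutation starting with $v^*$ already attains $\fdp$ value at least $\fsm_{i^*}(v^*) = P$, and since $\rk$ maximizes $\fdp$ we get $\fdp(\rk) \ge P$. Combining the pieces yields
\[
    K\fdp'(\rk) \;\ge\; \fdp(\rk) - \minval P \;\ge\; (1 - \minval)\fdp(\rk),
\]
which is the claimed inequality. The only delicate step is the counting bound $N \le \nfsms$, which hinges on the ``large'' definition via $2\cost(v) > \bg_i$; everything else is a routine floor-rounding computation.
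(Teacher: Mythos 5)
Your proof is correct and follows essentially the same route as the paper's: the pointwise floor bound $\fsm_i(v) - K\fsm'_i(v) \le K$, the observation that each function can contribute at most one large item to the score (so the additive loss is at most $\nfsms K = \minval P$), and the conversion to a multiplicative bound via $P \le \fdp(\rk)$. The only difference is that you explicitly justify the final step $\fdp(\rk) \ge P$ through the singleton argument, which the paper asserts implicitly; note that this step does require $P$ to be taken over \emph{feasible} large items (i.e., $\cost(v) \le \bg_i$), as you flag parenthetically.
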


\begin{proof}
Due to scaling and rounding down we have $\fsm_i(v) - K \fsm'_i(v) \leq K$. 
Since there can be at most one large item per function,
and the score $\fdp$ contains at most $\nfsms$ functions,
thus, $\fdp(\rk) - K\fdp'(\rk) \leq \nfsms K = P \minval \leq \minval \fdp(\rk)$.
\qed
\end{proof}

\begin{corollary}
\label{corollary:dp}
Algorithm~\ref{alg-dp} with rounding yields $1/(1 - \minval)$ approximation guarantee
in $\bigO(n \nfsms^2 / \minval)$ time.
\end{corollary}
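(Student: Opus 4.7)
The plan is to combine Theorem~\ref{theorem:algdp} (exact optimality of the DP on integer-valued instances, together with its running-time bound) with Theorem~\ref{theorem:fptas} (rounding error is small), applied to the scaled instance $\fsm'_i = \lfloor \fsm_i / K \rfloor$ with $K = P\minval/\nfsms$.

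First, I would bound the integer range of the rounded objective $\fdp'$ so that Theorem~\ref{theorem:algdp} can be invoked. Since at most one large item contributes to each function in a permutation and each such contribution is at most $P$, we have $\fdp(\rk) \leq \nfsms P$ for every $\rk$. Consequently $\fdp'(\rk) \leq \nfsms P / K = \nfsms^{2}/\minval$, so we may take $k = \bigO(\nfsms^{2}/\minval)$ when applying Algorithm~\ref{alg-dp} on the rounded instance. Plugging this into the runtime from Theorem~\ref{theorem:algdp} yields
\[
\bigO\!\left(n(k + \nfsms) + \nfsms \log \nfsms\right) \;=\; \bigO\!\left(n\nfsms^{2}/\minval\right),
\]
where the $\nfsms \log \nfsms$ preprocessing and the $n\nfsms$ term are absorbed into the dominant $n\nfsms^{2}/\minval$ term.

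Next, I would establish the approximation guarantee. Let $\rkopt$ be an optimal permutation for the original \msrl instance and let $\hat\rk$ be the permutation returned by the DP on the rounded instance. Since the DP is exact on integer inputs, $\fdp'(\hat\rk) \geq \fdp'(\rkopt)$. Rounding down gives $\fdp(\rk) \geq K \fdp'(\rk)$ for every $\rk$ (in particular for $\hat\rk$), while Theorem~\ref{theorem:fptas} applied to $\rkopt$ yields $K\fdp'(\rkopt) \geq (1-\minval)\fdp(\rkopt)$. Chaining these three inequalities,
\[
\fdp(\hat\rk) \;\geq\; K\fdp'(\hat\rk) \;\geq\; K\fdp'(\rkopt) \;\geq\; (1-\minval)\fdp(\rkopt),
\]
so $\fdp(\rkopt)/\fdp(\hat\rk) \leq 1/(1-\minval)$, which is exactly the claimed approximation ratio.

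There is no real obstacle here; the only thing to be careful about is the direction of the two inequalities relating $\fdp$ and $K\fdp'$ (rounding down tightens one direction but loosens the other), and the fact that Theorem~\ref{theorem:fptas} is stated for the permutation maximizing $\fdp$, which is precisely the role played by $\rkopt$ in the chain above. Everything else is direct substitution into the previously proved theorems.
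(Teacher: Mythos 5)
Your proposal is correct and follows essentially the same route as the paper: the same chain $\fdp(\hat\rk) \geq K\fdp'(\hat\rk) \geq K\fdp'(\rkopt) \geq (1-\minval)\fdp(\rkopt)$ for the approximation ratio, and the same bound $\fdp'(\cdot) \leq \nfsms P/K = \nfsms^2/\minval$ substituted into the running time of Theorem~\ref{theorem:algdp}. You merely spell out a few steps (the direction of the rounding inequality, the role of $\rkopt$ in Theorem~\ref{theorem:fptas}) that the paper leaves implicit.
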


\begin{proof}
Let $\rk$ be the permutation with the largest $\fdp$ 
and let $\rk'$ be the permutation with the largest $\fdp'$. Then
$\fdp(\rk') \geq K\fdp'(\rk') \geq  K\fdp'(\rk) \geq (1 - \minval)\fdp(\rk)$, proving the approximation guarantee.

To prove the running time note that $\fdp(\cdot) \leq \nfsms P$ and $\fdp'(\cdot) \leq \nfsms P / K = m^2 / \minval$. Theorem~\ref{theorem:algdp} proves the claim.\qed
\end{proof}

We are finally ready to state our main result for \msr with non-uniform cost.
\begin{theorem}
\label{thm:knsapsack}
The best among Algorithm \ref{alg} with coefficient $\coef_{i}=1$ and Algorithm \ref{alg-dp} is $(3+1/(1-\minval))$-approximation for the \msr problem with non-uniform cost.
\end{theorem}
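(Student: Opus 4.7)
The plan is to combine the two pre-established bounds. Theorem~\ref{theorem:knapsack-no-large-item} already gives us $3\ALG_1 + \fdp(\rkopt) \ge \OPT$, valid for any instance (the residual $\fdp(\rkopt)$ captures exactly the contribution of large items in $\rkopt$). It therefore suffices to bound $\fdp(\rkopt)$ in terms of $\ALG_2$, and the natural conduit is Corollary~\ref{corollary:dp}, which guarantees that Algorithm~\ref{alg-dp} returns a permutation whose $\fdp$-value is within $(1-\minval)$ of the \msrl optimum. Since $\rkopt$ is itself a permutation of \V, it is a feasible \msrl solution, so its $\fdp$-value is dominated by the \msrl optimum.

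The central step is to translate the $\fdp$-guarantee on the \DP output into a guarantee on $\ALG_2$, which measures the full \msr objective of that output. The key observation is that each function $\fsm_i$ can afford at most one large item, so whenever $\fsm_i$ contributes to $\fdp$ of a permutation $\rk$, it does so through a single large item $v$ lying inside the feasible prefix $\rk_{\idx_i}$. Monotonicity then yields $\fsm_i(\rk_{\idx_i}) \ge \fsm_i(\{v\})$, and summing over $i$ gives $\ALG_2 \ge \fdp$ of the \DP output. Combined with Corollary~\ref{corollary:dp} and the feasibility of $\rkopt$ for \msrl, this yields $\ALG_2 \ge (1-\minval)\,\fdp(\rkopt)$, equivalently $\fdp(\rkopt) \le \ALG_2/(1-\minval)$.

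Substituting into the bound of Theorem~\ref{theorem:knapsack-no-large-item}:
\[
   \OPT \;\le\; 3\,\ALG_1 + \frac{\ALG_2}{1-\minval} \;\le\; \left(3 + \frac{1}{1-\minval}\right)\max(\ALG_1,\ALG_2),
\]
which establishes the claimed factor. The main obstacle is precisely the inequality $\ALG_2 \ge \fdp(\cdot)$ of the \DP permutation: $\fdp$ only counts contributions from large items, whereas $\ALG_2$ is the full \msr objective, so one must pair up the terms using the ``at most one large item per function'' property and rely on monotonicity. A minor bookkeeping detail is that the \DP output may involve only a subset of \V; we extend it to a full permutation by appending the remaining items arbitrarily, which leaves $\ALG_2$ unchanged for those $\fsm_i$ whose feasible prefix $\rk_{\idx_i}$ is already contained in the \DP-produced prefix, and can only increase it for other functions by monotonicity.
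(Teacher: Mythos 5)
Your proposal is correct and follows essentially the same route as the paper: combine the bound $3\ALG_1 + \fdp(\rkopt) \ge \OPT$ of Theorem~\ref{theorem:knapsack-no-large-item} with the $1/(1-\minval)$ guarantee of Corollary~\ref{corollary:dp} and take the better of the two solutions. The only difference is that you spell out the step $\ALG_2 \ge \fdp(\cdot)$ of the \DP output (via the at-most-one-large-item-per-function property and monotonicity), which the paper leaves implicit; this is a welcome clarification rather than a departure.
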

\begin{proof}

Theorem~\ref{theorem:knapsack-no-large-item} and Corollary~\ref{corollary:dp} imply that
\[
	(3 + (1 - \minval)^{-1})\ALG \geq 3 \ALG_1 + (1 - \minval)^{-1}\ALG_2 \geq \ALG_1 + \fdp(\rkopt) \geq \OPT,
\]
where $\ALG = \max \{ \ALG_1, \ALG_2 \}$,
proving the claim.\qed
\end{proof}

\section{Experimental evaluation}
\label{section:experiments}

In this section, we evaluate the performance of the proposed algorithms on real-world datasets.
We first discuss our experimental evaluation for a playlist-making use-case. 
We model this use-case using the \emph{max-activation ranking} (\mar) problem, 
which is a special case of the \msr problem 
when the submodular functions $\fsm_i$ are 0--1 functions.
We then conduct two experiments for the \msr problem:
($i$) multiple intents re-ranking and 
($ii$) sequential active learning.
Finally, we evaluate the running time of our methods.
\revise{Statistics of the datasets used in the experiments are summarized in Table~\ref{tbl:datasets}.}
Our implementation and pre-processing scripts can be found in \code

\smallskip
\para{Proposed methods and baselines.}
The proposed greedy algorithms are denoted by \greedy and \wgreedy;
as discussed in Section~\ref{section:cardinality}.
The proposed dynamic program is denoted by \DP.
As baselines we use the following algorithms.
\begin{itemize}
	\item The greedy algorithm for the \sr problem~\citep{azar2011ranking},
	which favors functions near completion.
	We refer to this baseline as \greedysr.
	\item \revise{When only the minimum budget among all functions is considered, 
	the objective is a submodular function as a whole.
	We then consider the well-known ``best-of-two'' algorithm that returns the best solution among the 
	solutions found by a cost-efficient greedy method and by selecting the best singleton item.
	We refer to this baseline as \greedymin.}
	\item A simple ranking method (\quality) that orders individual items in non-increasing quality.
	\item A random ranking algorithm (\random).
\end{itemize}
\revise{Note that in general, computing the optimal solution requires enumerating all sequences of length equal to the maximum budget, 
which is computationally intractable even for a modest scenario with universe set $|\V|=100$ and budget $\bg=10$.}

\begin{table}[t]
  \caption{Datasets statistics}
  \label{tbl:datasets}
  \centering
\begin{tabular}{lrr}
\toprule
Dataset & $\nV = |\V|$ 
        & $\nfsms = |\fsms|$ \\
\midrule
Songs	&1872	&100	\\
Movies	&3669	&100	\\
Books	&3753	&1000	\\
20 Newsgroups	&172	&5	\\
Handwritten Digits	&1347	&3	\\
\bottomrule
\end{tabular}
\end{table}

\subsection{Experiments with the max-activation ranking (\mar) problem}

We evaluate our methods on three datasets, 
the Million Song dataset~\citep{Bertin-Mahieux2011}, 
the MovieLens dataset~\citep{harper2015movielens}, and
the Amazon Review dataset on books category~\citep{ni2019justifying}.
The three datasets have similar format, where
each record can be seen as a triple of user, item and rating.
We describe our experimental evaluation for the first dataset, and
the other two datasets are processed in the same way and give very similar results, 
as can be verified in Figure~\ref{fig:music}.

In the Million Song dataset, each record is a triple representing a user, song and play count.
We assume that a user likes a song if they play the song more than once.
We investigate an instance of the \mar problem for the 
application scenario of creating a \emph{playlist}.
In particular, we want to find a ranking of songs that maximizes 
the number of users who like at least one song among songs they listen to.
In this case, each user is modeled as a 0--1 activation function.
We generate a random budget for each user, 
i.e., the maximum number of songs a user will listen to, from 1 to a given maximum budget.
We also generate a random cost from 1 to 10 for each song 
in order to experiment with an additional non-uniform cost scenario.

The results of our evaluation are shown in Figure \ref{fig:music}.
The error bars are over random user budgets and item costs.
In the unit-cost scenario, the proposed \wgreedy algorithm is the best performing, 
closely followed by the proposed \greedy algorithm.
The performance of the baselines is inferior, 
and one reason is that they fail to take into account the user budget.
In the non-uniform cost scenario, the proposed \greedy algorithm obtains the best performance.
Note that it is expected that \DP has poor performance, as it is meant to help in extreme cases.
Also note that \DP does not scale for the book-list dataset ---
more details on scalability are discussed in Section \ref{subsection:runtime}.
Interestingly, \wgreedy performs worse than \greedysr, 
which indicates that a more sophisticated weighting scheme is needed to combine non-uniform budget and cost.

\begin{figure}[t]
    \centering
	\subcaptionbox{Songs (non-uniform\\ costs)}{
\begin{tikzpicture}

\definecolor{color0}{rgb}{0.12156862745098,0.466666666666667,0.705882352941177}
\definecolor{color1}{rgb}{1,0.498039215686275,0.0549019607843137}
\definecolor{color2}{rgb}{0.580392156862745,0.403921568627451,0.741176470588235}
\definecolor{color3}{rgb}{0.83921568627451,0.152941176470588,0.156862745098039}
\definecolor{color4}{rgb}{0.172549019607843,0.627450980392157,0.172549019607843}
\definecolor{color5}{rgb}{0.549019607843137,0.337254901960784,0.294117647058824}
\definecolor{color6}{rgb}{0.890196078431372,0.466666666666667,0.76078431372549}

\begin{axis}[width = 0.3\textwidth, height = 0.25\textwidth, legend image post style={scale=0.4}, ylabel style={align=center},
legend cell align={left},
legend style={
  fill opacity=0.8,
  draw opacity=1,
  text opacity=1,
  at={(0.5,0.91)},
  anchor=north,
  draw=white!80!black
},
tick align=outside,
tick pos=left,
x grid style={white!69.0196078431373!black},
xlabel={Maximum budget},
xmin=3.7, xmax=21.3,
xtick style={color=black},
y grid style={white!69.0196078431373!black},
ylabel={\#activated\\users},
ymin=-2.20109726779474, ymax=26.195864312413,
ytick style={color=black}
, legend style={at={(0.25,2.1)},anchor=north,legend columns=2,font=\scriptsize}]
\path [draw=color0, semithick]
(axis cs:5,5.00591415096638)
--(axis cs:5,9.43853029347807);

\path [draw=color0, semithick]
(axis cs:10,10.7311533266676)
--(axis cs:10,15.0466244511102);

\path [draw=color0, semithick]
(axis cs:15,17.1276147611929)
--(axis cs:15,21.0946074610293);

\path [draw=color0, semithick]
(axis cs:20,20.2060177796166)
--(axis cs:20,24.9050933314945);

\path [draw=color1, semithick]
(axis cs:5.5,4.40856765357792)
--(axis cs:5.5,8.92476567975541);

\path [draw=color1, semithick]
(axis cs:10.5,5.72022204589528)
--(axis cs:10.5,12.2797779541047);

\path [draw=color1, semithick]
(axis cs:15.5,10.1707398974507)
--(axis cs:15.5,15.1625934358826);

\path [draw=color1, semithick]
(axis cs:20.5,11.1032117153827)
--(axis cs:20.5,15.7856771735062);

\path [draw=color2, semithick]
(axis cs:5,3.99566814889469)
--(axis cs:5,7.11544296221642);

\path [draw=color2, semithick]
(axis cs:10,8.33552445496605)
--(axis cs:10,13.2200311005895);

\path [draw=color2, semithick]
(axis cs:15,14.5726216849363)
--(axis cs:15,18.5384894261748);

\path [draw=color2, semithick]
(axis cs:20,14.9859264662011)
--(axis cs:20,20.12518464491);

\path [draw=color3, semithick]
(axis cs:4.5,4.36164762243269)
--(axis cs:4.5,8.5272412664562);

\path [draw=color3, semithick]
(axis cs:9.5,8.13770037144415)
--(axis cs:9.5,12.3067440730003);

\path [draw=color3, semithick]
(axis cs:14.5,12.88855715957)
--(axis cs:14.5,17.7781095070967);

\path [draw=color3, semithick]
(axis cs:19.5,13.8711959445202)
--(axis cs:19.5,19.4621373888132);

\path [draw=color4, semithick]
(axis cs:5,-0.910326286876202)
--(axis cs:5,4.9103262868762);

\path [draw=color4, semithick]
(axis cs:5.2,1.19199156853911)
--(axis cs:5.2,5.919119542572);

\path [draw=color4, semithick]
(axis cs:10.2,1.78110653479845)
--(axis cs:10.2,9.33000457631266);

\path [draw=color4, semithick]
(axis cs:15.2,3.65340643997206)
--(axis cs:15.2,7.01326022669461);

\path [draw=color5, semithick]
(axis cs:5,0.0469610813936028)
--(axis cs:5,0.175261140828619);

\path [draw=color5, semithick]
(axis cs:10,-0.132905823079584)
--(axis cs:10,1.46623915641292);

\path [draw=color5, semithick]
(axis cs:10,0.0205606038414234)
--(axis cs:10,2.2016616183808);

\path [draw=color5, semithick]
(axis cs:15,0.266277342504992)
--(axis cs:15,2.17816710193945);

\path [draw=color6, semithick]
(axis cs:4.8,0.0803928301306396)
--(axis cs:4.8,2.80849605875825);

\path [draw=color6, semithick]
(axis cs:9.8,1.40049876048482)
--(axis cs:9.8,3.26616790618185);

\path [draw=color6, semithick]
(axis cs:14.8,2.79644344356249)
--(axis cs:14.8,5.20355655643751);

\path [draw=color6, semithick]
(axis cs:19.8,2.07495776931385)
--(axis cs:19.8,4.36948667513059);

\addplot [semithick, color0, mark=*, mark size=1, mark options={solid}]
table {%
5 7.22222222222222
10 12.8888888888889
15 19.1111111111111
20 22.5555555555556
};
\addlegendentry{Greedy-U}
\addplot [semithick, color1, mark=*, mark size=1, mark options={solid}]
table {%
5.5 6.66666666666667
10.5 9
15.5 12.6666666666667
20.5 13.4444444444444
};
\addlegendentry{Greedy-W}
\addplot [semithick, color2, mark=*, mark size=1, mark options={solid}]
table {%
5 5.55555555555556
10 10.7777777777778
15 16.5555555555556
20 17.5555555555556
};
\addlegendentry{Subm}
\addplot [semithick, color3, mark=*, mark size=1, mark options={solid}]
table {%
4.5 6.44444444444444
9.5 10.2222222222222
14.5 15.3333333333333
19.5 16.6666666666667
};
\addlegendentry{AG}
\addplot [semithick, color4, mark=*, mark size=1, mark options={solid}]
table {%
5.2 2
10.2 3.55555555555556
15.2 5.55555555555556
20.2 5.33333333333333
};
\addlegendentry{Quality}
\addplot [semithick, color5, mark=*, mark size=1, mark options={solid}]
table {%
5 0.111111111111111
10 0.666666666666667
15 1.11111111111111
20 1.22222222222222
};
\addlegendentry{Random}
\addplot [semithick, color6, mark=*, mark size=1, mark options={solid}]
table {%
4.8 1.44444444444445
9.8 2.33333333333333
14.8 4
19.8 3.22222222222222
};
\addlegendentry{DP}
\legend{}\end{axis}

\end{tikzpicture}
    }
    \hskip -2ex
    \subcaptionbox{Books (non-uniform\\ costs)}{
\begin{tikzpicture}

\definecolor{color0}{rgb}{0.12156862745098,0.466666666666667,0.705882352941177}
\definecolor{color1}{rgb}{1,0.498039215686275,0.0549019607843137}
\definecolor{color2}{rgb}{0.580392156862745,0.403921568627451,0.741176470588235}
\definecolor{color3}{rgb}{0.83921568627451,0.152941176470588,0.156862745098039}
\definecolor{color4}{rgb}{0.172549019607843,0.627450980392157,0.172549019607843}
\definecolor{color5}{rgb}{0.549019607843137,0.337254901960784,0.294117647058824}

\begin{axis}[width = 0.3\textwidth, height = 0.25\textwidth, legend image post style={scale=0.4}, ylabel style={align=center},
legend cell align={left},
legend style={
  fill opacity=0.8,
  draw opacity=1,
  text opacity=1,
  at={(0.03,0.97)},
  anchor=north west,
  draw=white!80!black
},
tick align=outside,
tick pos=left,
x grid style={white!69.0196078431373!black},
xlabel={Maximum budget},
xmin=3.7, xmax=21.3,
xtick style={color=black},
y grid style={white!69.0196078431373!black},
ylabel={\#activated\\users},
ymin=-2.47909376325106, ymax=28.7980211818341,
ytick style={color=black}
, legend style={at={(0.25,2.1)},anchor=north,legend columns=2,font=\scriptsize}]
\path [draw=color0, semithick]
(axis cs:5,4.55187099848267)
--(axis cs:5,8.114795668184);

\path [draw=color0, semithick]
(axis cs:10,9.84039042057622)
--(axis cs:10,16.8262762460904);

\path [draw=color0, semithick]
(axis cs:15,15.7274593830677)
--(axis cs:15,22.2725406169323);

\path [draw=color0, semithick]
(axis cs:20,23.0681103055688)
--(axis cs:20,27.3763341388757);

\path [draw=color1, semithick]
(axis cs:5.5,3.12587996719147)
--(axis cs:5.5,7.31856447725297);

\path [draw=color1, semithick]
(axis cs:10.5,6.0695841331371)
--(axis cs:10.5,13.2637492001962);

\path [draw=color1, semithick]
(axis cs:15.5,8.0504388360351)
--(axis cs:15.5,15.9495611639649);

\path [draw=color1, semithick]
(axis cs:20.5,8.3404462428423)
--(axis cs:20.5,17.8817759793799);

\path [draw=color2, semithick]
(axis cs:5,2.93648849964411)
--(axis cs:5,8.39684483368922);

\path [draw=color2, semithick]
(axis cs:10,7.92650465721482)
--(axis cs:10,12.7401620094518);

\path [draw=color2, semithick]
(axis cs:15,12.6917875065968)
--(axis cs:15,18.1971013822921);

\path [draw=color2, semithick]
(axis cs:20,15.1281893495545)
--(axis cs:20,21.7606995393344);

\path [draw=color3, semithick]
(axis cs:4.5,4.46309950887979)
--(axis cs:4.5,7.98134493556465);

\path [draw=color3, semithick]
(axis cs:9.5,7.88021275192923)
--(axis cs:9.5,13.2308983591819);

\path [draw=color3, semithick]
(axis cs:14.5,11.345120446603)
--(axis cs:14.5,17.5437684422858);

\path [draw=color3, semithick]
(axis cs:19.5,14.8879262479656)
--(axis cs:19.5,22.0009626409233);

\path [draw=color4, semithick]
(axis cs:5,-1.05740672029264)
--(axis cs:5,3.94629560918153);

\path [draw=color4, semithick]
(axis cs:10,-0.065023870382471)
--(axis cs:10,6.73169053704914);

\path [draw=color4, semithick]
(axis cs:5.2,1.05993519770101)
--(axis cs:5.2,9.38450924674343);

\path [draw=color4, semithick]
(axis cs:10.2,4.32672470481843)
--(axis cs:10.2,11.2288308507371);

\path [draw=color5, semithick]
(axis cs:5,0.0396961775744111)
--(axis cs:5,0.849192711314478);

\path [draw=color5, semithick]
(axis cs:10,-0.132905823079583)
--(axis cs:10,1.46623915641292);

\path [draw=color5, semithick]
(axis cs:15,-0.220361520099535)
--(axis cs:15,1.99813929787731);

\path [draw=color5, semithick]
(axis cs:10,0.258285947644837)
--(axis cs:10,2.18615849679961);

\addplot [semithick, color0, mark=*, mark size=1, mark options={solid}]
table {%
5 6.33333333333333
10 13.3333333333333
15 19
20 25.2222222222222
};
\addlegendentry{Greedy-U}
\addplot [semithick, color1, mark=*, mark size=1, mark options={solid}]
table {%
5.5 5.22222222222222
10.5 9.66666666666667
15.5 12
20.5 13.1111111111111
};
\addlegendentry{Greedy-W}
\addplot [semithick, color2, mark=*, mark size=1, mark options={solid}]
table {%
5 5.66666666666667
10 10.3333333333333
15 15.4444444444444
20 18.4444444444444
};
\addlegendentry{Subm}
\addplot [semithick, color3, mark=*, mark size=1, mark options={solid}]
table {%
4.5 6.22222222222222
9.5 10.5555555555556
14.5 14.4444444444444
19.5 18.4444444444444
};
\addlegendentry{AG}
\addplot [semithick, color4, mark=*, mark size=1, mark options={solid}]
table {%
5.2 1.44444444444445
10.2 3.33333333333334
15.2 5.22222222222222
20.2 7.77777777777778
};
\addlegendentry{Quality}
\addplot [semithick, color5, mark=*, mark size=1, mark options={solid}]
table {%
5 0.444444444444444
10 0.666666666666667
15 0.888888888888889
20 1.22222222222222
};
\addlegendentry{Random}
\legend{}\end{axis}

\end{tikzpicture}
    }
    \hskip -2ex
    \subcaptionbox{Movies (non-uniform\\ costs)}{
\begin{tikzpicture}

\definecolor{color0}{rgb}{0.12156862745098,0.466666666666667,0.705882352941177}
\definecolor{color1}{rgb}{1,0.498039215686275,0.0549019607843137}
\definecolor{color2}{rgb}{0.580392156862745,0.403921568627451,0.741176470588235}
\definecolor{color3}{rgb}{0.83921568627451,0.152941176470588,0.156862745098039}
\definecolor{color4}{rgb}{0.172549019607843,0.627450980392157,0.172549019607843}
\definecolor{color5}{rgb}{0.549019607843137,0.337254901960784,0.294117647058824}
\definecolor{color6}{rgb}{0.890196078431372,0.466666666666667,0.76078431372549}

\begin{axis}[width = 0.3\textwidth, height = 0.25\textwidth, legend image post style={scale=0.4}, ylabel style={align=center},
legend cell align={left},
legend style={
  fill opacity=0.8,
  draw opacity=1,
  text opacity=1,
  at={(0.97,0.03)},
  anchor=south east,
  draw=white!80!black
},
tick align=outside,
tick pos=left,
x grid style={white!69.0196078431373!black},
xlabel={Maximum budget},
xmin=3.7, xmax=21.3,
xtick style={color=black},
y grid style={white!69.0196078431373!black},
ylabel={\#activated\\users},
ymin=-4.16256049228811, ymax=89.5676506686618,
ytick style={color=black}
, legend style={at={(0.25,2.1)},anchor=north,legend columns=2,font=\scriptsize}]
\path [draw=color0, semithick]
(axis cs:5,37.0462221401567)
--(axis cs:5,52.9537778598433);

\path [draw=color0, semithick]
(axis cs:10,57.1279543522369)
--(axis cs:10,66.4276012033187);

\path [draw=color0, semithick]
(axis cs:15,71.7770580201237)
--(axis cs:15,79.1118308687652);

\path [draw=color0, semithick]
(axis cs:20,74.6462292381195)
--(axis cs:20,84.4648818729916);

\path [draw=color1, semithick]
(axis cs:5.5,34.7958817330258)
--(axis cs:5.5,50.7596738225298);

\path [draw=color1, semithick]
(axis cs:10.5,55.051863697187)
--(axis cs:10.5,66.7259140805908);

\path [draw=color1, semithick]
(axis cs:15.5,72.7263067632925)
--(axis cs:15.5,79.4959154589297);

\path [draw=color1, semithick]
(axis cs:20.5,75.5817023639067)
--(axis cs:20.5,85.3071865249822);

\path [draw=color2, semithick]
(axis cs:5,35.1172690910064)
--(axis cs:5,51.3271753534381);

\path [draw=color2, semithick]
(axis cs:10,54.5219209071794)
--(axis cs:10,63.7003013150428);

\path [draw=color2, semithick]
(axis cs:15,69.8132052472746)
--(axis cs:15,78.853461419392);

\path [draw=color2, semithick]
(axis cs:20,72.4773835268811)
--(axis cs:20,83.3003942508967);

\path [draw=color3, semithick]
(axis cs:4.5,35.1172690910064)
--(axis cs:4.5,51.3271753534381);

\path [draw=color3, semithick]
(axis cs:9.5,54.5219209071794)
--(axis cs:9.5,63.7003013150428);

\path [draw=color3, semithick]
(axis cs:14.5,69.8132052472746)
--(axis cs:14.5,78.853461419392);

\path [draw=color3, semithick]
(axis cs:19.5,72.4773835268811)
--(axis cs:19.5,83.3003942508967);

\path [draw=color4, semithick]
(axis cs:5.2,0.322211606199845)
--(axis cs:5.2,27.0111217271335);

\path [draw=color4, semithick]
(axis cs:10.2,6.22418090257929)
--(axis cs:10.2,39.3313746529763);

\path [draw=color4, semithick]
(axis cs:15.2,28.0247204226096)
--(axis cs:15.2,59.7530573551682);

\path [draw=color4, semithick]
(axis cs:20.2,29.622184494131)
--(axis cs:20.2,56.6000377280913);

\path [draw=color5, semithick]
(axis cs:5,0.151568731270138)
--(axis cs:5,3.84843126872986);

\path [draw=color5, semithick]
(axis cs:10,0.10916448691833)
--(axis cs:10,2.33527995752611);

\path [draw=color5, semithick]
(axis cs:15,0.0979036513914338)
--(axis cs:15,8.79098523749746);

\path [draw=color5, semithick]
(axis cs:20,0.334579130168352)
--(axis cs:20,5.66542086983165);

\path [draw=color6, semithick]
(axis cs:4.8,15.7600766025731)
--(axis cs:4.8,31.7954789529824);

\path [draw=color6, semithick]
(axis cs:9.8,28.6007843592995)
--(axis cs:9.8,44.5103267518116);

\path [draw=color6, semithick]
(axis cs:14.8,27.9623481661201)
--(axis cs:14.8,46.4820962783243);

\path [draw=color6, semithick]
(axis cs:19.8,31.1414619797326)
--(axis cs:19.8,42.6363157980452);

\addplot [semithick, color0, mark=*, mark size=1, mark options={solid}]
table {%
5 45
10 61.7777777777778
15 75.4444444444444
20 79.5555555555556
};
\addlegendentry{Greedy-U}
\addplot [semithick, color1, mark=*, mark size=1, mark options={solid}]
table {%
5.5 42.7777777777778
10.5 60.8888888888889
15.5 76.1111111111111
20.5 80.4444444444444
};
\addlegendentry{Greedy-W}
\addplot [semithick, color2, mark=*, mark size=1, mark options={solid}]
table {%
5 43.2222222222222
10 59.1111111111111
15 74.3333333333333
20 77.8888888888889
};
\addlegendentry{Subm}
\addplot [semithick, color3, mark=*, mark size=1, mark options={solid}]
table {%
4.5 43.2222222222222
9.5 59.1111111111111
14.5 74.3333333333333
19.5 77.8888888888889
};
\addlegendentry{AG}
\addplot [semithick, color4, mark=*, mark size=1, mark options={solid}]
table {%
5.2 13.6666666666667
10.2 22.7777777777778
15.2 43.8888888888889
20.2 43.1111111111111
};
\addlegendentry{Quality}
\addplot [semithick, color5, mark=*, mark size=1, mark options={solid}]
table {%
5 2
10 1.22222222222222
15 4.44444444444444
20 3
};
\addlegendentry{Random}
\addplot [semithick, color6, mark=*, mark size=1, mark options={solid}]
table {%
4.8 23.7777777777778
9.8 36.5555555555556
14.8 37.2222222222222
19.8 36.8888888888889
};
\addlegendentry{DP}
\end{axis}

\end{tikzpicture}
    }
    \subcaptionbox{Songs (unit costs)}{
\begin{tikzpicture}

\definecolor{color0}{rgb}{0.12156862745098,0.466666666666667,0.705882352941177}
\definecolor{color1}{rgb}{1,0.498039215686275,0.0549019607843137}
\definecolor{color2}{rgb}{0.580392156862745,0.403921568627451,0.741176470588235}
\definecolor{color3}{rgb}{0.83921568627451,0.152941176470588,0.156862745098039}
\definecolor{color4}{rgb}{0.172549019607843,0.627450980392157,0.172549019607843}
\definecolor{color5}{rgb}{0.549019607843137,0.337254901960784,0.294117647058824}

\begin{axis}[width = 0.3\textwidth, height = 0.25\textwidth, legend image post style={scale=0.4}, ylabel style={align=center},
legend cell align={left},
legend style={
  fill opacity=0.8,
  draw opacity=1,
  text opacity=1,
  at={(0.5,0.91)},
  anchor=north,
  draw=white!80!black
},
tick align=outside,
tick pos=left,
x grid style={white!69.0196078431373!black},
xlabel={Maximum budget},
xmin=3.7, xmax=21.3,
xtick style={color=black},
y grid style={white!69.0196078431373!black},
ylabel={\#activated\\users},
ymin=1.53710594091525, ymax=47.1937558826023,
ytick style={color=black}
, legend style={at={(0.25,2.1)},anchor=north,legend columns=2,font=\scriptsize}]
\path [draw=color0, semithick]
(axis cs:5,16.5665421941156)
--(axis cs:5,18.3223466947732);

\path [draw=color0, semithick]
(axis cs:10,24.2194083332312)
--(axis cs:10,27.7805916667688);

\path [draw=color0, semithick]
(axis cs:15,32.6701794426022)
--(axis cs:15,35.9964872240645);

\path [draw=color0, semithick]
(axis cs:20,37.366055144207)
--(axis cs:20,41.7450559669041);

\path [draw=color1, semithick]
(axis cs:5,16.70050787707)
--(axis cs:5,18.1883810118189);

\path [draw=color1, semithick]
(axis cs:10,25.6445913674898)
--(axis cs:10,29.0220752991768);

\path [draw=color1, semithick]
(axis cs:15,34.8357553774246)
--(axis cs:15,37.830911289242);

\path [draw=color1, semithick]
(axis cs:20,42.4371019430299)
--(axis cs:20,45.1184536125256);

\path [draw=color2, semithick]
(axis cs:5.5,13.532110683054)
--(axis cs:5.5,15.5790004280571);

\path [draw=color2, semithick]
(axis cs:10.5,19.2131959619264)
--(axis cs:10.5,23.2312484825181);

\path [draw=color2, semithick]
(axis cs:15.5,25.5768175366273)
--(axis cs:15.5,28.4231824633727);

\path [draw=color2, semithick]
(axis cs:20.5,29.9032103844096)
--(axis cs:20.5,33.8745673933682);

\path [draw=color3, semithick]
(axis cs:4.5,12.0583986819747)
--(axis cs:4.5,15.4971568735808);

\path [draw=color3, semithick]
(axis cs:9.5,18.9790763352433)
--(axis cs:9.5,23.0209236647567);

\path [draw=color3, semithick]
(axis cs:14.5,25.0478883621024)
--(axis cs:14.5,28.9521116378976);

\path [draw=color3, semithick]
(axis cs:19.5,29.1193236977841)
--(axis cs:19.5,35.3251207466603);

\path [draw=color4, semithick]
(axis cs:5,11.594521225502)
--(axis cs:5,13.7388121078313);

\path [draw=color4, semithick]
(axis cs:10,17.1066285106884)
--(axis cs:10,20.0044826004227);

\path [draw=color4, semithick]
(axis cs:15,20.9182189403113)
--(axis cs:15,23.7484477263554);

\path [draw=color4, semithick]
(axis cs:20,23.3533223009129)
--(axis cs:20,26.6466776990871);

\path [draw=color5, semithick]
(axis cs:5,3.61240821099193)
--(axis cs:5,4.38759178900807);

\path [draw=color5, semithick]
(axis cs:10,5.34847231442767)
--(axis cs:10,6.42930546335011);

\path [draw=color5, semithick]
(axis cs:15,7.12120906343717)
--(axis cs:15,10.6565687143406);

\path [draw=color5, semithick]
(axis cs:20,7.98410855328217)
--(axis cs:20,11.1270025578289);

\addplot [semithick, color0, mark=*, mark size=1, mark options={solid}]
table {%
5 17.4444444444444
10 26
15 34.3333333333333
20 39.5555555555556
};
\addlegendentry{Greedy-U}
\addplot [semithick, color1, mark=*, mark size=1, mark options={solid}]
table {%
5 17.4444444444444
10 27.3333333333333
15 36.3333333333333
20 43.7777777777778
};
\addlegendentry{Greedy-W}
\addplot [semithick, color2, mark=*, mark size=1, mark options={solid}]
table {%
5.5 14.5555555555556
10.5 21.2222222222222
15.5 27
20.5 31.8888888888889
};
\addlegendentry{Subm}
\addplot [semithick, color3, mark=*, mark size=1, mark options={solid}]
table {%
4.5 13.7777777777778
9.5 21
14.5 27
19.5 32.2222222222222
};
\addlegendentry{AG}
\addplot [semithick, color4, mark=*, mark size=1, mark options={solid}]
table {%
5 12.6666666666667
10 18.5555555555556
15 22.3333333333333
20 25
};
\addlegendentry{Quality}
\addplot [semithick, color5, mark=*, mark size=1, mark options={solid}]
table {%
5 4
10 5.88888888888889
15 8.88888888888889
20 9.55555555555556
};
\addlegendentry{Random}
\legend{}\end{axis}

\end{tikzpicture}
    }
    \hskip -2ex
    \subcaptionbox{Books (unit costs)}{
\begin{tikzpicture}

\definecolor{color0}{rgb}{0.12156862745098,0.466666666666667,0.705882352941177}
\definecolor{color1}{rgb}{1,0.498039215686275,0.0549019607843137}
\definecolor{color2}{rgb}{0.580392156862745,0.403921568627451,0.741176470588235}
\definecolor{color3}{rgb}{0.83921568627451,0.152941176470588,0.156862745098039}
\definecolor{color4}{rgb}{0.172549019607843,0.627450980392157,0.172549019607843}
\definecolor{color5}{rgb}{0.549019607843137,0.337254901960784,0.294117647058824}

\begin{axis}[width = 0.3\textwidth, height = 0.25\textwidth, legend image post style={scale=0.4}, ylabel style={align=center},
legend cell align={left},
legend style={
  fill opacity=0.8,
  draw opacity=1,
  text opacity=1,
  at={(0.5,0.91)},
  anchor=north,
  draw=white!80!black
},
tick align=outside,
tick pos=left,
x grid style={white!69.0196078431373!black},
xlabel={Maximum budget},
xmin=3.7, xmax=21.3,
xtick style={color=black},
y grid style={white!69.0196078431373!black},
ylabel={\#activated\\users},
ymin=0.35357699892469, ymax=52.1480349447538,
ytick style={color=black}
, legend style={at={(0.25,2.1)},anchor=north,legend columns=2,font=\scriptsize}]
\path [draw=color0, semithick]
(axis cs:5,17.2918309225527)
--(axis cs:5,18.7081690774473);

\path [draw=color0, semithick]
(axis cs:10,28.0037221509024)
--(axis cs:10,29.1073889602087);

\path [draw=color0, semithick]
(axis cs:15,35.344225229504)
--(axis cs:15,39.3224414371627);

\path [draw=color0, semithick]
(axis cs:20,45.2792775886254)
--(axis cs:20,48.4985001891524);

\path [draw=color1, semithick]
(axis cs:5,17.2604712745321)
--(axis cs:5,18.9617509476901);

\path [draw=color1, semithick]
(axis cs:10,28.1382357430811)
--(axis cs:10,30.5284309235855);

\path [draw=color1, semithick]
(axis cs:15,38.0242015483769)
--(axis cs:15,40.4202428960675);

\path [draw=color1, semithick]
(axis cs:20,47.9840363760162)
--(axis cs:20,49.7937414017615);

\path [draw=color2, semithick]
(axis cs:5.5,13.4998778920984)
--(axis cs:5.5,16.2778998856794);

\path [draw=color2, semithick]
(axis cs:10.5,19.6201851295586)
--(axis cs:10.5,23.4909259815525);

\path [draw=color2, semithick]
(axis cs:15.5,27.1683385942629)
--(axis cs:15.5,31.0538836279594);

\path [draw=color2, semithick]
(axis cs:20.5,30.9126202955638)
--(axis cs:20.5,37.3096019266584);

\path [draw=color3, semithick]
(axis cs:4.5,13.2682775810324)
--(axis cs:4.5,15.3983890856343);

\path [draw=color3, semithick]
(axis cs:9.5,19.7001236947393)
--(axis cs:9.5,23.8554318608163);

\path [draw=color3, semithick]
(axis cs:14.5,25.4997145299112)
--(axis cs:14.5,31.1669521367554);

\path [draw=color3, semithick]
(axis cs:19.5,30.8442714338316)
--(axis cs:19.5,37.1557285661684);

\path [draw=color4, semithick]
(axis cs:5,13.1638449366498)
--(axis cs:5,16.8361550633502);

\path [draw=color4, semithick]
(axis cs:10,17.8325842691864)
--(axis cs:10,21.0563046197025);

\path [draw=color4, semithick]
(axis cs:15,23.9955934498838)
--(axis cs:15,29.1155176612273);

\path [draw=color4, semithick]
(axis cs:20,29.7936602918075)
--(axis cs:20,33.9841174859703);

\path [draw=color5, semithick]
(axis cs:5,2.70787054191692)
--(axis cs:5,3.95879612474975);

\path [draw=color5, semithick]
(axis cs:10,3.93640885604285)
--(axis cs:10,6.28581336617937);

\path [draw=color5, semithick]
(axis cs:15,5.36674481091094)
--(axis cs:15,9.74436630020017);

\path [draw=color5, semithick]
(axis cs:20,8.13949802285743)
--(axis cs:20,11.6382797549203);

\addplot [semithick, color0, mark=*, mark size=1, mark options={solid}]
table {%
5 18
10 28.5555555555556
15 37.3333333333333
20 46.8888888888889
};
\addlegendentry{Greedy-U}
\addplot [semithick, color1, mark=*, mark size=1, mark options={solid}]
table {%
5 18.1111111111111
10 29.3333333333333
15 39.2222222222222
20 48.8888888888889
};
\addlegendentry{Greedy-W}
\addplot [semithick, color2, mark=*, mark size=1, mark options={solid}]
table {%
5.5 14.8888888888889
10.5 21.5555555555556
15.5 29.1111111111111
20.5 34.1111111111111
};
\addlegendentry{Subm}
\addplot [semithick, color3, mark=*, mark size=1, mark options={solid}]
table {%
4.5 14.3333333333333
9.5 21.7777777777778
14.5 28.3333333333333
19.5 34
};
\addlegendentry{AG}
\addplot [semithick, color4, mark=*, mark size=1, mark options={solid}]
table {%
5 15
10 19.4444444444444
15 26.5555555555556
20 31.8888888888889
};
\addlegendentry{Quality}
\addplot [semithick, color5, mark=*, mark size=1, mark options={solid}]
table {%
5 3.33333333333333
10 5.11111111111111
15 7.55555555555556
20 9.88888888888889
};
\addlegendentry{Random}
\legend{}\end{axis}

\end{tikzpicture}
    }
    \hskip -2ex
    \subcaptionbox{Movies (unit costs)}{
\begin{tikzpicture}

\definecolor{color0}{rgb}{0.12156862745098,0.466666666666667,0.705882352941177}
\definecolor{color1}{rgb}{1,0.498039215686275,0.0549019607843137}
\definecolor{color2}{rgb}{0.580392156862745,0.403921568627451,0.741176470588235}
\definecolor{color3}{rgb}{0.83921568627451,0.152941176470588,0.156862745098039}
\definecolor{color4}{rgb}{0.172549019607843,0.627450980392157,0.172549019607843}
\definecolor{color5}{rgb}{0.549019607843137,0.337254901960784,0.294117647058824}

\begin{axis}[width = 0.3\textwidth, height = 0.25\textwidth, legend image post style={scale=0.4}, ylabel style={align=center},
legend cell align={left},
legend style={
  fill opacity=0.8,
  draw opacity=1,
  text opacity=1,
  at={(0.5,0.09)},
  anchor=south,
  draw=white!80!black
},
tick align=outside,
tick pos=left,
x grid style={white!69.0196078431373!black},
xlabel={Maximum budget},
xmin=3.7, xmax=21.3,
xtick style={color=black},
y grid style={white!69.0196078431373!black},
ylabel={\#activated\\users},
ymin=-2.18836460008123, ymax=102.075837602107,
ytick style={color=black}
, legend style={at={(0.25,2.1)},anchor=north,legend columns=2,font=\scriptsize}]
\path [draw=color0, semithick]
(axis cs:5,75.6054754146613)
--(axis cs:5,81.7278579186721);

\path [draw=color0, semithick]
(axis cs:10,84.3125300496322)
--(axis cs:10,89.4652477281455);

\path [draw=color0, semithick]
(axis cs:15,89.9386618765194)
--(axis cs:15,93.3946714568139);

\path [draw=color0, semithick]
(axis cs:20,90.4008223280092)
--(axis cs:20,95.3769554497685);

\path [draw=color1, semithick]
(axis cs:5,77.0840672577121)
--(axis cs:5,82.9159327422879);

\path [draw=color1, semithick]
(axis cs:10,86.7284527398496)
--(axis cs:10,91.4937694823726);

\path [draw=color1, semithick]
(axis cs:15,91.9535720416597)
--(axis cs:15,95.824205736118);

\path [draw=color1, semithick]
(axis cs:20,92.8856665383969)
--(axis cs:20,97.3365556838253);

\path [draw=color2, semithick]
(axis cs:5.5,73.8770885595585)
--(axis cs:5.5,79.6784669959971);

\path [draw=color2, semithick]
(axis cs:10.5,82.7426072935936)
--(axis cs:10.5,89.4796149286287);

\path [draw=color2, semithick]
(axis cs:15.5,90.4095588520429)
--(axis cs:15.5,92.479330036846);

\path [draw=color2, semithick]
(axis cs:20.5,89.9159011202714)
--(axis cs:20.5,95.417432213062);

\path [draw=color3, semithick]
(axis cs:4.5,73.8770885595585)
--(axis cs:4.5,79.6784669959971);

\path [draw=color3, semithick]
(axis cs:9.5,82.7426072935936)
--(axis cs:9.5,89.4796149286287);

\path [draw=color3, semithick]
(axis cs:14.5,90.4095588520429)
--(axis cs:14.5,92.479330036846);

\path [draw=color3, semithick]
(axis cs:19.5,89.9159011202714)
--(axis cs:19.5,95.417432213062);

\path [draw=color4, semithick]
(axis cs:5,68.8684785158208)
--(axis cs:5,72.6870770397348);

\path [draw=color4, semithick]
(axis cs:10,75.6182168965173)
--(axis cs:10,81.715116436816);

\path [draw=color4, semithick]
(axis cs:15,82.1297947724157)
--(axis cs:15,84.3146496720287);

\path [draw=color4, semithick]
(axis cs:20,84.5802115149725)
--(axis cs:20,88.3086773739164);

\path [draw=color5, semithick]
(axis cs:5,2.55091731820004)
--(axis cs:5,15.4490826818);

\path [draw=color5, semithick]
(axis cs:10,7.82899313695818)
--(axis cs:10,17.0598957519307);

\path [draw=color5, semithick]
(axis cs:15,13.2737640946256)
--(axis cs:15,21.1706803498188);

\path [draw=color5, semithick]
(axis cs:20,16.0480585021531)
--(axis cs:20,29.7297192756247);

\addplot [semithick, color0, mark=*, mark size=1, mark options={solid}]
table {%
5 78.6666666666667
10 86.8888888888889
15 91.6666666666667
20 92.8888888888889
};
\addlegendentry{Greedy-U}
\addplot [semithick, color1, mark=*, mark size=1, mark options={solid}]
table {%
5 80
10 89.1111111111111
15 93.8888888888889
20 95.1111111111111
};
\addlegendentry{Greedy-W}
\addplot [semithick, color2, mark=*, mark size=1, mark options={solid}]
table {%
5.5 76.7777777777778
10.5 86.1111111111111
15.5 91.4444444444444
20.5 92.6666666666667
};
\addlegendentry{Subm}
\addplot [semithick, color3, mark=*, mark size=1, mark options={solid}]
table {%
4.5 76.7777777777778
9.5 86.1111111111111
14.5 91.4444444444444
19.5 92.6666666666667
};
\addlegendentry{AG}
\addplot [semithick, color4, mark=*, mark size=1, mark options={solid}]
table {%
5 70.7777777777778
10 78.6666666666667
15 83.2222222222222
20 86.4444444444444
};
\addlegendentry{Quality}
\addplot [semithick, color5, mark=*, mark size=1, mark options={solid}]
table {%
5 9
10 12.4444444444444
15 17.2222222222222
20 22.8888888888889
};
\addlegendentry{Random}
\legend{}\end{axis}

\end{tikzpicture}
    }
    \caption{\label{fig:music}
    Results of using the \mar problem formulation 
    for making a playlist of items.
    The goal is to maximize the number of activated users.
    \revise{The universe \V includes songs, movies or books.}
    A user (\revise{a 0--1 activation function $\fsm_i$}) is activated if they like at least one item among all items they consume within their budget.
    Markers are jittered horizontally to avoid overlap.
    }
\end{figure}
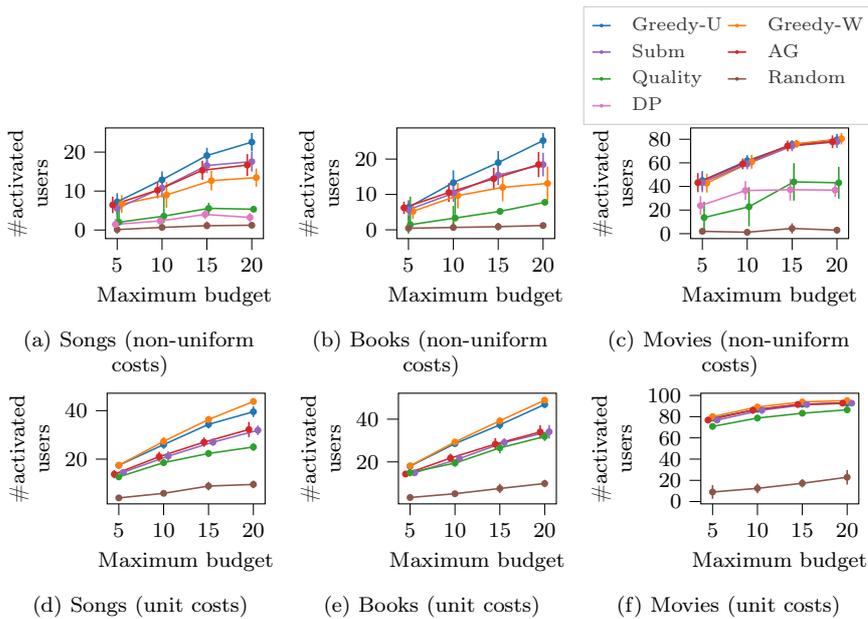

\subsection{Experiments with the max-submodular ranking (\msr) problem}

\para{Multiple intents re-ranking.}
We simulate a web-page ranking application for documents in the 20 Newsgroups dataset~\citep{UCI}.
For each newsgroup, we treat its title as a query, and collect documents that contains the query.
We extract 5 topics from the collected documents by means of LDA model \citep{blei2003latent}.
Subsequently, each topic (i.e., its top 20 keywords) is considered as a potential user intent, and the submodular utility for a particular topic when given a set of documents is the coverage rate of its top keywords.
We aim to find a ranking of documents that maximize the total utility of all user intents.
As in the previous experiment, we generate a random budget for each user intent, 
i.e., the maximum number of documents the potential user will read, from 1 to a given maximum budget.
For an additional non-uniform cost scenario,
we use the document length as the cost for reading a document, and accordingly multiply the budget by the average document length.

The results of our experiment are shown in Figure \ref{fig:news}, 
where we report the average performance across all newsgroups.
In the unit-cost scenario, the top-contender algorithms have close performance.
This is due to the overwhelming advantage of lengthy documents that contain more words and produce higher utility.
In the more realistic non-uniform cost scenario, our algorithms, \greedy and \wgreedy, achieve the best performance.
\quality algorithm behaves the worst as it fails to consider the cost of items, and its first-rank lengthy document exceeds the user budget most of the time.

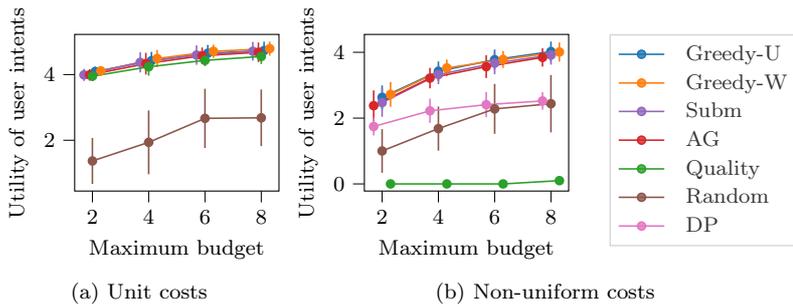
\begin{figure}[t]
    \centering
    \subcaptionbox{Unit costs}{
\begin{tikzpicture}

\definecolor{color0}{rgb}{0.12156862745098,0.466666666666667,0.705882352941177}
\definecolor{color1}{rgb}{1,0.498039215686275,0.0549019607843137}
\definecolor{color2}{rgb}{0.580392156862745,0.403921568627451,0.741176470588235}
\definecolor{color3}{rgb}{0.83921568627451,0.152941176470588,0.156862745098039}
\definecolor{color4}{rgb}{0.172549019607843,0.627450980392157,0.172549019607843}
\definecolor{color5}{rgb}{0.549019607843137,0.337254901960784,0.294117647058824}

\begin{axis}[width = 0.35\textwidth, height = 0.3\textwidth, 
legend cell align={left},
legend style={
  fill opacity=0.8,
  draw opacity=1,
  text opacity=1,
  at={(0.5,0.09)},
  anchor=south,
  draw=white!80!black
},
tick align=outside,
tick pos=left,
x grid style={white!69.0196078431373!black},
xlabel={Maximum budget},
xmin=1.37, xmax=8.63,
xtick style={color=black},
y grid style={white!69.0196078431373!black},
ylabel={Utility of user intents},
ymin=0.453452752226354, ymax=5.22300465991638,
ytick style={color=black}
, legend style={at={(1.2,1)}, anchor=north west}]
\path [draw=color0, semithick]
(axis cs:2.1,3.94221370897776)
--(axis cs:2.1,4.24111962435558);

\path [draw=color0, semithick]
(axis cs:4.1,4.15210096615027)
--(axis cs:4.1,4.69123236718306);

\path [draw=color0, semithick]
(axis cs:6.1,4.39697129630376)
--(axis cs:6.1,4.90302870369624);

\path [draw=color0, semithick]
(axis cs:8.1,4.46113126877804)
--(axis cs:8.1,5.00220206455529);

\path [draw=color1, semithick]
(axis cs:2.3,3.97247674429266)
--(axis cs:2.3,4.25085658904067);

\path [draw=color1, semithick]
(axis cs:4.3,4.22548163137764)
--(axis cs:4.3,4.75118503528902);

\path [draw=color1, semithick]
(axis cs:6.3,4.50628568415505)
--(axis cs:6.3,4.91038098251162);

\path [draw=color1, semithick]
(axis cs:8.3,4.57379315406953)
--(axis cs:8.3,5.00620684593047);

\path [draw=color2, semithick]
(axis cs:1.7,3.80613515853666)
--(axis cs:1.7,4.17386484146334);

\path [draw=color2, semithick]
(axis cs:3.7,4.06469037596684)
--(axis cs:3.7,4.67864295736649);

\path [draw=color2, semithick]
(axis cs:5.7,4.29928077984582)
--(axis cs:5.7,4.89071922015418);

\path [draw=color2, semithick]
(axis cs:7.7,4.42400678955333)
--(axis cs:7.7,4.99599321044667);

\path [draw=color3, semithick]
(axis cs:1.9,3.81431829759686)
--(axis cs:1.9,4.16568170240314);

\path [draw=color3, semithick]
(axis cs:3.9,3.99954730815092)
--(axis cs:3.9,4.66378602518242);

\path [draw=color3, semithick]
(axis cs:5.9,4.32278714068096)
--(axis cs:5.9,4.8238795259857);

\path [draw=color3, semithick]
(axis cs:7.9,4.38019901473753)
--(axis cs:7.9,4.97646765192914);

\path [draw=color4, semithick]
(axis cs:2,3.80333911857247)
--(axis cs:2,4.09666088142754);

\path [draw=color4, semithick]
(axis cs:4,3.9668951232645)
--(axis cs:4,4.50643821006883);

\path [draw=color4, semithick]
(axis cs:6,4.25759566949948)
--(axis cs:6,4.60573766383385);

\path [draw=color4, semithick]
(axis cs:8,4.3208167741974)
--(axis cs:8,4.78251655913593);

\path [draw=color5, semithick]
(axis cs:2,0.670250566212264)
--(axis cs:2,2.06974943378774);

\path [draw=color5, semithick]
(axis cs:4,0.965703751406389)
--(axis cs:4,2.90762958192694);

\path [draw=color5, semithick]
(axis cs:6,1.76566382281228)
--(axis cs:6,3.56766951052106);

\path [draw=color5, semithick]
(axis cs:8,1.82169348482544)
--(axis cs:8,3.54497318184122);

\addplot [semithick, color0, mark=*, mark size=1.5, mark options={solid}]
table {%
2.1 4.09166666666667
4.1 4.42166666666667
6.1 4.65
8.1 4.73166666666667
};
\addlegendentry{Greedy-U}
\addplot [semithick, color1, mark=*, mark size=1.5, mark options={solid}]
table {%
2.3 4.11166666666667
4.3 4.48833333333333
6.3 4.70833333333333
8.3 4.79
};
\addlegendentry{Greedy-W}
\addplot [semithick, color2, mark=*, mark size=1.5, mark options={solid}]
table {%
1.7 3.99
3.7 4.37166666666667
5.7 4.595
7.7 4.71
};
\addlegendentry{Subm}
\addplot [semithick, color3, mark=*, mark size=1.5, mark options={solid}]
table {%
1.9 3.99
3.9 4.33166666666667
5.9 4.57333333333333
7.9 4.67833333333333
};
\addlegendentry{AG}
\addplot [semithick, color4, mark=*, mark size=1.5, mark options={solid}]
table {%
2 3.95
4 4.23666666666667
6 4.43166666666667
8 4.55166666666667
};
\addlegendentry{Quality}
\addplot [semithick, color5, mark=*, mark size=1.5, mark options={solid}]
table {%
2 1.37
4 1.93666666666667
6 2.66666666666667
8 2.68333333333333
};
\addlegendentry{Random}
\legend{}\end{axis}

\end{tikzpicture}
    }
    \hskip -2ex
    \subcaptionbox{Non-uniform costs}{
\begin{tikzpicture}

\definecolor{color0}{rgb}{0.12156862745098,0.466666666666667,0.705882352941177}
\definecolor{color1}{rgb}{1,0.498039215686275,0.0549019607843137}
\definecolor{color2}{rgb}{0.580392156862745,0.403921568627451,0.741176470588235}
\definecolor{color3}{rgb}{0.83921568627451,0.152941176470588,0.156862745098039}
\definecolor{color4}{rgb}{0.172549019607843,0.627450980392157,0.172549019607843}
\definecolor{color5}{rgb}{0.549019607843137,0.337254901960784,0.294117647058824}
\definecolor{color6}{rgb}{0.890196078431372,0.466666666666667,0.76078431372549}

\begin{axis}[width = 0.35\textwidth, height = 0.3\textwidth, 
legend cell align={left},
legend style={
  fill opacity=0.8,
  draw opacity=1,
  text opacity=1,
  at={(0.97,0.03)},
  anchor=south east,
  draw=white!80!black
},
tick align=outside,
tick pos=left,
x grid style={white!69.0196078431373!black},
xlabel={Maximum budget},
xmin=1.37, xmax=8.63,
xtick style={color=black},
y grid style={white!69.0196078431373!black},
ylabel={Utility of user intents},
ymin=-0.216331974548822, ymax=4.54297146552525,
ytick style={color=black}
, legend style={at={(1.2,1)}, anchor=north west}]
\path [draw=color0, semithick]
(axis cs:2,2.26557881316854)
--(axis cs:2,2.99442118683146);

\path [draw=color0, semithick]
(axis cs:4,3.12438809144744)
--(axis cs:4,3.72227857521923);

\path [draw=color0, semithick]
(axis cs:6,3.46573310871404)
--(axis cs:6,4.08426689128596);

\path [draw=color0, semithick]
(axis cs:8,3.7166938423569)
--(axis cs:8,4.32663949097643);

\path [draw=color1, semithick]
(axis cs:2.3,2.34129808179564)
--(axis cs:2.3,3.09870191820436);

\path [draw=color1, semithick]
(axis cs:4.3,3.26885472997143)
--(axis cs:4.3,3.77781193669524);

\path [draw=color1, semithick]
(axis cs:6.3,3.52377253161344)
--(axis cs:6.3,4.03956080171989);

\path [draw=color1, semithick]
(axis cs:8.3,3.71949389142083)
--(axis cs:8.3,4.29717277524584);

\path [draw=color2, semithick]
(axis cs:2,2.03948571864621)
--(axis cs:2,2.90051428135379);

\path [draw=color2, semithick]
(axis cs:4,3.0309328111346)
--(axis cs:4,3.61573385553206);

\path [draw=color2, semithick]
(axis cs:6,3.33089245616948)
--(axis cs:6,4.00910754383052);

\path [draw=color2, semithick]
(axis cs:8,3.62516164431438)
--(axis cs:8,4.20483835568562);

\path [draw=color3, semithick]
(axis cs:1.7,1.91151070870979)
--(axis cs:1.7,2.84182262462354);

\path [draw=color3, semithick]
(axis cs:3.7,2.90373469436023)
--(axis cs:3.7,3.52626530563977);

\path [draw=color3, semithick]
(axis cs:5.7,3.21751712392954)
--(axis cs:5.7,3.90914954273713);

\path [draw=color3, semithick]
(axis cs:7.7,3.56576588481571)
--(axis cs:7.7,4.12090078185096);

\path [draw=color4, semithick]
(axis cs:2.3,0)
--(axis cs:2.3,0);

\path [draw=color4, semithick]
(axis cs:4.3,0)
--(axis cs:4.3,0);

\path [draw=color4, semithick]
(axis cs:6.3,0)
--(axis cs:6.3,0);

\path [draw=color4, semithick]
(axis cs:8.3,0.0563451057983247)
--(axis cs:8.3,0.143654894201675);

\path [draw=color5, semithick]
(axis cs:2,0.338649443530337)
--(axis cs:2,1.664683889803);

\path [draw=color5, semithick]
(axis cs:4,1.01045401539494)
--(axis cs:4,2.3528793179384);

\path [draw=color5, semithick]
(axis cs:6,1.52118612066079)
--(axis cs:6,3.03881387933921);

\path [draw=color5, semithick]
(axis cs:8,1.56536486963445)
--(axis cs:8,3.30796846369888);

\path [draw=color6, semithick]
(axis cs:1.7,1.47301969227177)
--(axis cs:1.7,2.01698030772823);

\path [draw=color6, semithick]
(axis cs:3.7,1.853877154377)
--(axis cs:3.7,2.58945617895633);

\path [draw=color6, semithick]
(axis cs:5.7,2.02820075794167)
--(axis cs:5.7,2.788465908725);

\path [draw=color6, semithick]
(axis cs:7.7,2.25618514019693)
--(axis cs:7.7,2.78714819313641);

\addplot [semithick, color0, mark=*, mark size=1.5, mark options={solid}]
table {%
2 2.63
4 3.42333333333333
6 3.775
8 4.02166666666667
};
\addlegendentry{Greedy-U}
\addplot [semithick, color1, mark=*, mark size=1.5, mark options={solid}]
table {%
2.3 2.72
4.3 3.52333333333333
6.3 3.78166666666667
8.3 4.00833333333333
};
\addlegendentry{Greedy-W}
\addplot [semithick, color2, mark=*, mark size=1.5, mark options={solid}]
table {%
2 2.47
4 3.32333333333333
6 3.67
8 3.915
};
\addlegendentry{Subm}
\addplot [semithick, color3, mark=*, mark size=1.5, mark options={solid}]
table {%
1.7 2.37666666666667
3.7 3.215
5.7 3.56333333333333
7.7 3.84333333333333
};
\addlegendentry{AG}
\addplot [semithick, color4, mark=*, mark size=1.5, mark options={solid}]
table {%
2.3 1.11022302462516e-16
4.3 1.11022302462516e-16
6.3 1.11022302462516e-16
8.3 0.1
};
\addlegendentry{Quality}
\addplot [semithick, color5, mark=*, mark size=1.5, mark options={solid}]
table {%
2 1.00166666666667
4 1.68166666666667
6 2.28
8 2.43666666666667
};
\addlegendentry{Random}
\addplot [semithick, color6, mark=*, mark size=1.5, mark options={solid}]
table {%
1.7 1.745
3.7 2.22166666666667
5.7 2.40833333333333
7.7 2.52166666666667
};
\addlegendentry{DP}
\end{axis}

\end{tikzpicture}
    }
    \caption{\label{fig:news}\msr for multiple intents re-ranking in web page ranking.
    The goal is to maximize the total utility of all user intents within their individual reading budget.
    \revise{The universe \V includes documents}.
    The utility of a user intent (\revise{a coverage function $\fsm_i$}) is represented by the coverage rate of its top keywords.
    Markers are jittered horizontally to avoid overlap.
    }
\end{figure}

\smallskip
\para{Sequential active learning.}
Active learning seeks to make label queries on only a small number of informative data points in order 
to maximize  model performance.
In particular, for the $k$-nearest neighbors ($k$NN) model,
an intuitive measure for informativeness of a set of labeled data points
is the average distance from an unlabeled data point to its closest labeled point, 
i.e., the facility-location function \citep{wei2015submodularity}.
We refer to this average distance as the \emph{radius}.
Thus, the active-learning task can be naturally formulated 
as labeling a small subset of data to maximize the radius reduction.
Note that the reduction of the radius by labeling a subset of data points is clearly non-decreasing and submodular.

In our setting, we assume that we have access to multiple models that are trained on the same labeled data, 
and we aim to label data sequentially to maximize the total reduction in the radii among all models.
This happens, for example, when each model runs on a different subset of features.
Interestingly, in this case each model can be seen as a student with different learning capacity, and a teacher tries to optimize the classroom teaching by feeding them labeled data \citep{zhu2017no}.
We evaluate the performance of active-learning $k$NNs ($k=1$) 
with Euclidean distance in the Handwritten Digits dataset~\citep{UCI}.
Each $k$NN model adopts a different strategy in unsupervised feature selection, 
such as variance thresholding, PCA, and feature agglomeration.
Again, we generate a random query budget for each model and a random cost (from 1 to 10) for labeling each data point.

As we can see in Figure \ref{fig:nist}, all greedy algorithms are very effective in reducing the radii.
The correlation between the radius reduction and model accuracy (over testing data) is obvious.
Note that the \random algorithm is a standard strong baseline in data subset selection, 
which is outperformed by the greedy algorithms by a large margin.
The comparison becomes more evident in the non-uniform cost scenario, as the \random algorithm fails to take into account the item costs.

\begin{figure}[t]
    \centering
    \subcaptionbox{Unit costs}{
\begin{tikzpicture}

\definecolor{color0}{rgb}{0.12156862745098,0.466666666666667,0.705882352941177}
\definecolor{color1}{rgb}{1,0.498039215686275,0.0549019607843137}
\definecolor{color2}{rgb}{0.580392156862745,0.403921568627451,0.741176470588235}
\definecolor{color3}{rgb}{0.83921568627451,0.152941176470588,0.156862745098039}
\definecolor{color4}{rgb}{0.172549019607843,0.627450980392157,0.172549019607843}
\definecolor{color5}{rgb}{0.549019607843137,0.337254901960784,0.294117647058824}

\begin{groupplot}[width = 0.35\textwidth, height = 0.3\textwidth, ylabel style={align=center}, group/vertical sep=20pt, xtick={25,50,75,100},group style={group size=1 by 2}]
\nextgroupplot[
scaled x ticks=manual:{}{\pgfmathparse{#1}},
tick align=outside,
tick pos=left,
x grid style={white!69.0196078431373!black},
xmin=16.85, xmax=108.15,
xtick style={color=black},
xticklabels={},
y grid style={white!69.0196078431373!black},
ylabel={Reduction of radii},
ymin=1.32407024988439, ymax=2.40464203270156,
ytick style={color=black}
, legend style={at={(1.2,1)}, anchor=north west}]
\path [draw=color0, semithick]
(axis cs:25,1.61195320770008)
--(axis cs:25,2.01190648861985);

\path [draw=color0, semithick]
(axis cs:50,1.94025551124264)
--(axis cs:50,2.19585362461796);

\path [draw=color0, semithick]
(axis cs:75,2.19850030442302)
--(axis cs:75,2.25519208187813);

\path [draw=color0, semithick]
(axis cs:100,2.22440122378592)
--(axis cs:100,2.3555251334826);

\path [draw=color1, semithick]
(axis cs:29,1.62149077292936)
--(axis cs:29,2.01905514426926);

\path [draw=color1, semithick]
(axis cs:54,1.94428806814415)
--(axis cs:54,2.20074774058681);

\path [draw=color1, semithick]
(axis cs:79,2.20259689143582)
--(axis cs:79,2.25451352624224);

\path [draw=color1, semithick]
(axis cs:104,2.22814040582531)
--(axis cs:104,2.35487048404817);

\path [draw=color2, semithick]
(axis cs:23,1.60940893308703)
--(axis cs:23,2.00783657637483);

\path [draw=color2, semithick]
(axis cs:48,1.93612393513076)
--(axis cs:48,2.19413390180669);

\path [draw=color2, semithick]
(axis cs:73,2.19635680010895)
--(axis cs:73,2.25218902895098);

\path [draw=color2, semithick]
(axis cs:98,2.22034607959515)
--(axis cs:98,2.35487243296956);

\path [draw=color3, semithick]
(axis cs:21,1.60939404122392)
--(axis cs:21,2.00791821750606);

\path [draw=color3, semithick]
(axis cs:46,1.93642719911866)
--(axis cs:46,2.19442252608159);

\path [draw=color3, semithick]
(axis cs:71,2.19672809197939)
--(axis cs:71,2.25268936363214);

\path [draw=color3, semithick]
(axis cs:96,2.22052860662522)
--(axis cs:96,2.35507210490983);

\path [draw=color4, semithick]
(axis cs:27,1.50958966972207)
--(axis cs:27,1.72533868224139);

\path [draw=color4, semithick]
(axis cs:52,1.71489711255072)
--(axis cs:52,1.85539001962504);

\path [draw=color4, semithick]
(axis cs:77,1.87572489339678)
--(axis cs:77,1.94514027349563);

\path [draw=color4, semithick]
(axis cs:102,1.93165489535032)
--(axis cs:102,2.06550211709546);

\path [draw=color5, semithick]
(axis cs:25,1.37318714910335)
--(axis cs:25,1.83289475917498);

\path [draw=color5, semithick]
(axis cs:50,1.74074420395628)
--(axis cs:50,2.03006468310985);

\path [draw=color5, semithick]
(axis cs:75,2.06611794542529)
--(axis cs:75,2.12979771383989);

\path [draw=color5, semithick]
(axis cs:100,2.08345267371251)
--(axis cs:100,2.25384519739206);

\addplot [semithick, color0, mark=*, mark size=1.5, mark options={solid}]
table {%
25 1.81192984815996
50 2.0680545679303
75 2.22684619315057
100 2.28996317863426
};
\addplot [semithick, color1, mark=*, mark size=1.5, mark options={solid}]
table {%
29 1.82027295859931
54 2.07251790436548
79 2.22855520883903
104 2.29150544493674
};
\addplot [semithick, color2, mark=*, mark size=1.5, mark options={solid}]
table {%
23 1.80862275473093
48 2.06512891846872
73 2.22427291452997
98 2.28760925628236
};
\addplot [semithick, color3, mark=*, mark size=1.5, mark options={solid}]
table {%
21 1.80865612936499
46 2.06542486260013
71 2.22470872780577
96 2.28780035576753
};
\addplot [semithick, color4, mark=*, mark size=1.5, mark options={solid}]
table {%
27 1.61746417598173
52 1.78514356608788
77 1.91043258344621
102 1.99857850622289
};
\addplot [semithick, color5, mark=*, mark size=1.5, mark options={solid}]
table {%
25 1.60304095413917
50 1.88540444353307
75 2.09795782963259
100 2.16864893555229
};

\nextgroupplot[
legend cell align={left},
legend style={
  fill opacity=0.8,
  draw opacity=1,
  text opacity=1,
  at={(0.5,0.09)},
  anchor=south,
  draw=white!80!black
},
tick align=outside,
tick pos=left,
x grid style={white!69.0196078431373!black},
xlabel={Maximum budget},
xmin=16.85, xmax=108.15,
xtick style={color=black},
y grid style={white!69.0196078431373!black},
ylabel={kNN accuracy},
ymin=0, ymax=1,
ytick style={color=black}
, legend style={at={(1.2,1)}, anchor=north west}]
\path [draw=color0, semithick]
(axis cs:25,0.252084661107206)
--(axis cs:25,0.71384126481872);

\path [draw=color0, semithick]
(axis cs:50,0.608852718964727)
--(axis cs:50,0.877896252228689);

\path [draw=color0, semithick]
(axis cs:75,0.853026335240518)
--(axis cs:75,0.872570372578412);

\path [draw=color0, semithick]
(axis cs:100,0.849558850735487)
--(axis cs:100,0.929206581363278);

\path [draw=color1, semithick]
(axis cs:29,0.240284310097226)
--(axis cs:29,0.71839881747479);

\path [draw=color1, semithick]
(axis cs:54,0.608095914244695)
--(axis cs:54,0.87717157546724);

\path [draw=color1, semithick]
(axis cs:79,0.851201436076259)
--(axis cs:79,0.877029016598638);

\path [draw=color1, semithick]
(axis cs:104,0.850078057491089)
--(axis cs:104,0.928687374607677);

\path [draw=color2, semithick]
(axis cs:23,0.256618965494166)
--(axis cs:23,0.719841939855629);

\path [draw=color2, semithick]
(axis cs:48,0.600974147926307)
--(axis cs:48,0.880178115448179);

\path [draw=color2, semithick]
(axis cs:73,0.853921285361849)
--(axis cs:73,0.878095175543501);

\path [draw=color2, semithick]
(axis cs:98,0.857874102321983)
--(axis cs:98,0.927146473809704);

\path [draw=color3, semithick]
(axis cs:21,0.256641598214859)
--(axis cs:21,0.719654698081438);

\path [draw=color3, semithick]
(axis cs:46,0.601058445919318)
--(axis cs:46,0.880093817455168);

\path [draw=color3, semithick]
(axis cs:71,0.855351108657514)
--(axis cs:71,0.879463706157301);

\path [draw=color3, semithick]
(axis cs:96,0.858407244142891)
--(axis cs:96,0.928094813470278);

\path [draw=color4, semithick]
(axis cs:27,0.116759813981539)
--(axis cs:27,0.171635247746856);

\path [draw=color4, semithick]
(axis cs:52,0.192198331407631)
--(axis cs:52,0.269694672707596);

\path [draw=color4, semithick]
(axis cs:77,0.308244751819177)
--(axis cs:77,0.379162655588231);

\path [draw=color4, semithick]
(axis cs:102,0.373296835244823)
--(axis cs:102,0.525468596853943);

\path [draw=color5, semithick]
(axis cs:25,0.186607958581717)
--(axis cs:25,0.486972288331864);

\path [draw=color5, semithick]
(axis cs:50,0.464855186117628)
--(axis cs:50,0.636873208944101);

\path [draw=color5, semithick]
(axis cs:75,0.6947514057887)
--(axis cs:75,0.784425548943811);

\path [draw=color5, semithick]
(axis cs:100,0.703856804559072)
--(axis cs:100,0.855813977333932);

\addplot [semithick, color0, dashed, mark=*, mark size=1.5, mark options={solid}]
table {%
25 0.482962962962963
50 0.743374485596708
75 0.862798353909465
100 0.889382716049383
};
\addlegendentry{Greedy-U}
\addplot [semithick, color1, dashed, mark=*, mark size=1.5, mark options={solid}]
table {%
29 0.479341563786008
54 0.742633744855967
79 0.864115226337449
104 0.889382716049383
};
\addlegendentry{Greedy-W}
\addplot [semithick, color2, dashed, mark=*, mark size=1.5, mark options={solid}]
table {%
23 0.488230452674897
48 0.740576131687243
73 0.866008230452675
98 0.892510288065844
};
\addlegendentry{Subm}
\addplot [semithick, color3, dashed, mark=*, mark size=1.5, mark options={solid}]
table {%
21 0.488148148148148
46 0.740576131687243
71 0.867407407407407
96 0.893251028806584
};
\addlegendentry{AG}
\addplot [semithick, color4, dashed, mark=*, mark size=1.5, mark options={solid}]
table {%
27 0.144197530864197
52 0.230946502057613
77 0.343703703703704
102 0.449382716049383
};
\addlegendentry{Quality}
\addplot [semithick, color5, dashed, mark=*, mark size=1.5, mark options={solid}]
table {%
25 0.33679012345679
50 0.550864197530864
75 0.739588477366255
100 0.779835390946502
};
\addlegendentry{Random}
\legend{}\end{groupplot}

\end{tikzpicture}
    }
    \hskip -2ex
    \subcaptionbox{Non-uniform costs}{
\begin{tikzpicture}

\definecolor{color0}{rgb}{0.12156862745098,0.466666666666667,0.705882352941177}
\definecolor{color1}{rgb}{1,0.498039215686275,0.0549019607843137}
\definecolor{color2}{rgb}{0.580392156862745,0.403921568627451,0.741176470588235}
\definecolor{color3}{rgb}{0.83921568627451,0.152941176470588,0.156862745098039}
\definecolor{color4}{rgb}{0.172549019607843,0.627450980392157,0.172549019607843}
\definecolor{color5}{rgb}{0.549019607843137,0.337254901960784,0.294117647058824}
\definecolor{color6}{rgb}{0.890196078431372,0.466666666666667,0.76078431372549}

\begin{groupplot}[width = 0.35\textwidth, height = 0.3\textwidth, ylabel style={align=center}, group/vertical sep=20pt, xtick={25,50,75,100},group style={group size=1 by 2}]
\nextgroupplot[
scaled x ticks=manual:{}{\pgfmathparse{#1}},
tick align=outside,
tick pos=left,
x grid style={white!69.0196078431373!black},
xmin=16.85, xmax=108.15,
xtick style={color=black},
xticklabels={},
y grid style={white!69.0196078431373!black},
ylabel={Reduction of radii},
ymin=-0.0642742223580677, ymax=2.5456143343521,
ytick style={color=black}
, legend style={at={(1.2,1)}, anchor=north west}]
\path [draw=color0, semithick]
(axis cs:25,0.986686347678879)
--(axis cs:25,1.89523891121217);

\path [draw=color0, semithick]
(axis cs:50,1.48528905337995)
--(axis cs:50,2.25546468234064);

\path [draw=color0, semithick]
(axis cs:75,2.15550627553066)
--(axis cs:75,2.19882781501917);

\path [draw=color0, semithick]
(axis cs:100,2.17383240559985)
--(axis cs:100,2.29443451431329);

\path [draw=color1, semithick]
(axis cs:29,0.981851894969747)
--(axis cs:29,1.90683573714328);

\path [draw=color1, semithick]
(axis cs:54,1.48549298275303)
--(axis cs:54,2.2576425803877);

\path [draw=color1, semithick]
(axis cs:79,2.15704280656911)
--(axis cs:79,2.19922540146853);

\path [draw=color1, semithick]
(axis cs:104,2.17760641315988)
--(axis cs:104,2.29389449472615);

\path [draw=color2, semithick]
(axis cs:23,0.982764039847151)
--(axis cs:23,1.89336849442032);

\path [draw=color2, semithick]
(axis cs:48,1.48215761039224)
--(axis cs:48,2.25298027536779);

\path [draw=color2, semithick]
(axis cs:73,2.15431376599168)
--(axis cs:73,2.1968954044525);

\path [draw=color2, semithick]
(axis cs:98,2.17174671153447)
--(axis cs:98,2.29332883647257);

\path [draw=color3, semithick]
(axis cs:21,0.982742364966567)
--(axis cs:21,1.89345613689405);

\path [draw=color3, semithick]
(axis cs:46,1.48215807417517)
--(axis cs:46,2.25298698305861);

\path [draw=color3, semithick]
(axis cs:71,2.15480844120455)
--(axis cs:71,2.19684920831328);

\path [draw=color3, semithick]
(axis cs:96,2.17173866858647)
--(axis cs:96,2.29316725419552);

\path [draw=color4, semithick]
(axis cs:27,0.0863881821320435)
--(axis cs:27,1.366769800454);

\path [draw=color4, semithick]
(axis cs:52,1.13991807052537)
--(axis cs:52,1.6741560803796);

\path [draw=color4, semithick]
(axis cs:77,1.6445049837483)
--(axis cs:77,1.68957261868894);

\path [draw=color4, semithick]
(axis cs:102,1.61157992868005)
--(axis cs:102,1.7775047681293);

\path [draw=color5, semithick]
(axis cs:25,0.0543570756742128)
--(axis cs:25,1.50855242527407);

\path [draw=color5, semithick]
(axis cs:50,1.13141245439712)
--(axis cs:50,1.73898934457288);

\path [draw=color5, semithick]
(axis cs:75,1.66734258836641)
--(axis cs:75,1.71905992603685);

\path [draw=color5, semithick]
(axis cs:100,1.76206300627401)
--(axis cs:100,1.88228549446562);

\path [draw=color6, semithick]
(axis cs:23,0.719733251548797)
--(axis cs:23,1.44923022743473);

\path [draw=color6, semithick]
(axis cs:48,0.93937705932044)
--(axis cs:48,1.48189108895933);

\path [draw=color6, semithick]
(axis cs:73,1.44465935806445)
--(axis cs:73,2.28670840254454);

\path [draw=color6, semithick]
(axis cs:98,1.4125416224088)
--(axis cs:98,2.42698303631982);

\addplot [semithick, color0, mark=*, mark size=1.5, mark options={solid}]
table {%
25 1.44096262944552
50 1.87037686786029
75 2.17716704527492
100 2.23413345995657
};
\addplot [semithick, color1, mark=*, mark size=1.5, mark options={solid}]
table {%
29 1.44434381605652
54 1.87156778157037
79 2.17813410401882
104 2.23575045394302
};
\addplot [semithick, color2, mark=*, mark size=1.5, mark options={solid}]
table {%
23 1.43806626713374
48 1.86756894288001
73 2.17560458522209
98 2.23253777400352
};
\addplot [semithick, color3, mark=*, mark size=1.5, mark options={solid}]
table {%
21 1.43809925093031
46 1.86757252861689
71 2.17582882475892
96 2.232452961391
};
\addplot [semithick, color4, mark=*, mark size=1.5, mark options={solid}]
table {%
27 0.726578991293021
52 1.40703707545248
77 1.66703880121862
102 1.69454234840467
};
\addplot [semithick, color5, mark=*, mark size=1.5, mark options={solid}]
table {%
25 0.781454750474144
50 1.435200899485
75 1.69320125720163
100 1.82217425036982
};
\addplot [semithick, color6, mark=*, mark size=1.5, mark options={solid}]
table {%
23 1.08448173949176
48 1.21063407413989
73 1.8656838803045
98 1.91976232936431
};

\nextgroupplot[
legend cell align={left},
legend style={
  fill opacity=0.8,
  draw opacity=1,
  text opacity=1,
  at={(0.5,0.09)},
  anchor=south,
  draw=white!80!black
},
tick align=outside,
tick pos=left,
x grid style={white!69.0196078431373!black},
xlabel={Maximum budget},
xmin=16.85, xmax=108.15,
xtick style={color=black},
y grid style={white!69.0196078431373!black},
ylabel={kNN accuracy},
ymin=0, ymax=1,
ytick style={color=black}
, legend style={at={(1.2,1)}, anchor=north west}]
\path [draw=color0, semithick]
(axis cs:25,0.164715790308535)
--(axis cs:25,0.665078448374593);

\path [draw=color0, semithick]
(axis cs:50,0.549797428020415)
--(axis cs:50,0.827486522596869);

\path [draw=color0, semithick]
(axis cs:75,0.813333316908953)
--(axis cs:75,0.84164610695936);

\path [draw=color0, semithick]
(axis cs:100,0.815725105128049)
--(axis cs:100,0.886661726147671);

\path [draw=color1, semithick]
(axis cs:29,0.166414938612131)
--(axis cs:29,0.66634226303396);

\path [draw=color1, semithick]
(axis cs:54,0.548542736980389)
--(axis cs:54,0.819852324748006);

\path [draw=color1, semithick]
(axis cs:79,0.813130938421374)
--(axis cs:79,0.841848485446939);

\path [draw=color1, semithick]
(axis cs:104,0.811845778575304)
--(axis cs:104,0.888071916897946);

\path [draw=color2, semithick]
(axis cs:23,0.17230410224543)
--(axis cs:23,0.657654745491195);

\path [draw=color2, semithick]
(axis cs:48,0.555835058348302)
--(axis cs:48,0.825399509552933);

\path [draw=color2, semithick]
(axis cs:73,0.810346989357894)
--(axis cs:73,0.846607743152394);

\path [draw=color2, semithick]
(axis cs:98,0.814270189564371)
--(axis cs:98,0.889927341299827);

\path [draw=color3, semithick]
(axis cs:21,0.172322821430925)
--(axis cs:21,0.657471417252202);

\path [draw=color3, semithick]
(axis cs:46,0.555835058348302)
--(axis cs:46,0.825399509552933);

\path [draw=color3, semithick]
(axis cs:71,0.81013813003522)
--(axis cs:71,0.847310429635562);

\path [draw=color3, semithick]
(axis cs:96,0.814240944999955)
--(axis cs:96,0.889133540596752);

\path [draw=color4, semithick]
(axis cs:27,0.00601255463386214)
--(axis cs:27,0.0976911490698416);

\path [draw=color4, semithick]
(axis cs:52,0.0831113230437177)
--(axis cs:52,0.134008018520068);

\path [draw=color4, semithick]
(axis cs:77,0.138112344133538)
--(axis cs:77,0.159336215537244);

\path [draw=color4, semithick]
(axis cs:102,0.140933928043175)
--(axis cs:102,0.179724508170817);

\path [draw=color5, semithick]
(axis cs:25,0.005531900744257)
--(axis cs:25,0.188048346169323);

\path [draw=color5, semithick]
(axis cs:50,0.207535300866502)
--(axis cs:50,0.338143711479177);

\path [draw=color5, semithick]
(axis cs:75,0.306853977871516)
--(axis cs:75,0.419894993321899);

\path [draw=color5, semithick]
(axis cs:100,0.426237201793657)
--(axis cs:100,0.546602304379183);

\path [draw=color6, semithick]
(axis cs:23,0.0562076665035069)
--(axis cs:23,0.126179164772213);

\path [draw=color6, semithick]
(axis cs:48,0.0673827498094046)
--(axis cs:48,0.110395027968373);

\path [draw=color6, semithick]
(axis cs:73,0.160771075421825)
--(axis cs:73,0.910340035689286);

\path [draw=color6, semithick]
(axis cs:98,0.162555874512321)
--(axis cs:98,0.991188981454757);

\addplot [semithick, color0, dashed, mark=*, mark size=1.5, mark options={solid}]
table {%
25 0.414897119341564
50 0.688641975308642
75 0.827489711934156
100 0.85119341563786
};
\addlegendentry{Greedy-U}
\addplot [semithick, color1, dashed, mark=*, mark size=1.5, mark options={solid}]
table {%
29 0.416378600823045
54 0.684197530864197
79 0.827489711934156
104 0.849958847736625
};
\addlegendentry{Greedy-W}
\addplot [semithick, color2, dashed, mark=*, mark size=1.5, mark options={solid}]
table {%
23 0.414979423868313
48 0.690617283950617
73 0.828477366255144
98 0.852098765432099
};
\addlegendentry{Subm}
\addplot [semithick, color3, dashed, mark=*, mark size=1.5, mark options={solid}]
table {%
21 0.414897119341564
46 0.690617283950617
71 0.828724279835391
96 0.851687242798354
};
\addlegendentry{AG}
\addplot [semithick, color4, dashed, mark=*, mark size=1.5, mark options={solid}]
table {%
27 0.0518518518518518
52 0.108559670781893
77 0.148724279835391
102 0.160329218106996
};
\addlegendentry{Quality}
\addplot [semithick, color5, dashed, mark=*, mark size=1.5, mark options={solid}]
table {%
25 0.0967901234567901
50 0.272839506172839
75 0.363374485596708
100 0.48641975308642
};
\addlegendentry{Random}
\addplot [semithick, color6, dashed, mark=*, mark size=1.5, mark options={solid}]
table {%
23 0.09119341563786
48 0.0888888888888888
73 0.535555555555556
98 0.576872427983539
};
\addlegendentry{DP}
\end{groupplot}

\end{tikzpicture}
    }
    \caption{\label{fig:nist}\msr for sequential data subset selection for $k$NN models.
    The goal is to boost the average predictive accuracy of $k$NN models.
    \revise{The universe \V includes all data points.}
    \revise{The sum of the surrogate objective function $\fsm_i$ (reduction of radii) for each model is optimized.}
    Markers are jittered horizontally to avoid overlap.
    }
\end{figure}

\subsection{Running time}
\label{subsection:runtime}

We examine the scalability of all methods by fixing either the number of users (i.e., functions) 
or the maximum budget (equal to the number of items), while varying the other.
In Figure \ref{fig:time} we demonstrate the running time of all algorithms for the task of making a synthetic playlist. 
In this case, we generate a dataset by assuming that each user likes a small random subset of items.
We generate a random budget for each user, from 1 to the given maximum budget, 
and a random cost from 1~to~10 for each item.

When comparing the running time, 
the \quality algorithm is a meaningful baseline, as it produces a ranking after a single evaluation on each item over all functions, 
i.e., $\bigO(\max\{\nV \log(\nV), \nfsms \nV\})$.
Its running time varies almost linearly as a function of the budget, 
which is in contrast to the behavior of the na\"ive greedy algorithms.
Thanks to the lazy evaluation technique \citep{leskovec2007cost}, the running time of all greedy algorithms actually grows nearly linearly in the budget.
The \greedysr algorithm is slower as it is subject to frequent function evaluations, because its greedy criterion depends on the current function values.
The running time of the \DP algorithm grows quadratically in the number of functions, which has difficulty in scaling to a very large number.
On the other hand, it scales well in the number of items, and particularly, when the budget is big, it finishes quickly as there is no large item.
The running time of all except for the \random algorithm grows linearly in the number of functions, which is inevitable if the utility of items is considered.

\begin{figure}[t]
    \centering
    \subcaptionbox{Increasing maximum budget}{
\begin{tikzpicture}

\definecolor{color0}{rgb}{0.12156862745098,0.466666666666667,0.705882352941177}
\definecolor{color1}{rgb}{1,0.498039215686275,0.0549019607843137}
\definecolor{color2}{rgb}{0.580392156862745,0.403921568627451,0.741176470588235}
\definecolor{color3}{rgb}{0.83921568627451,0.152941176470588,0.156862745098039}
\definecolor{color4}{rgb}{0.172549019607843,0.627450980392157,0.172549019607843}
\definecolor{color5}{rgb}{0.549019607843137,0.337254901960784,0.294117647058824}
\definecolor{color6}{rgb}{0.890196078431372,0.466666666666667,0.76078431372549}

\begin{axis}[width = 0.35\textwidth, height = 0.3\textwidth, 
minor tick style={draw=none}, 
legend style={
  legend pos=outer north east
},
ytick={1e-5, 1e-1, 1e3},
legend cell align={left},
legend style={
  fill opacity=0.8,
  draw opacity=1,
  text opacity=1,
  at={(0.5,0.09)},
  anchor=south,
  draw=white!80!black
},
log basis x={10},
log basis y={10},
tick align=outside,
tick pos=left,
x grid style={white!69.0196078431373!black},
xlabel={Maximum budget},
xmin=7.07945784384138, xmax=14125.3754462276,
xmode=log,
xtick style={color=black},
y grid style={white!69.0196078431373!black},
ylabel={Running time (s)},
ymin=9.59617302250428e-06, ymax=1239.74560273728,
ymode=log,
ytick style={color=black}
, legend style={at={(1.2,1)}, anchor=north west}]
\addplot [semithick, color0, mark=*, mark size=1.5, mark options={solid}]
table {%
10 0.0104139630000004
100 0.571955056
1000 10.108779
10000 135.019488217
};
\addlegendentry{Greedy-U}
\addplot [semithick, color1, mark=*, mark size=1.5, mark options={solid}]
table {%
10 0.0105613729999998
100 0.607377626
1000 8.674614012
10000 93.715140361
};
\addlegendentry{Greedy-W}
\addplot [semithick, color2, mark=*, mark size=1.5, mark options={solid}]
table {%
10 0.043976207
100 1.265490826
1000 14.929455466
10000 171.217683123
};
\addlegendentry{Subm}
\addplot [semithick, color3, mark=*, mark size=1.5, mark options={solid}]
table {%
10 0.03413961
100 2.645856123
1000 36.744558379
10000 530.444681095
};
\addlegendentry{AG}
\addplot [semithick, color4, mark=*, mark size=1.5, mark options={solid}]
table {%
10 0.0140123540000001
100 0.131146587
1000 1.282779214
10000 6.58408578800004
};
\addlegendentry{Quality}
\addplot [semithick, color5, mark=*, mark size=1.5, mark options={solid}]
table {%
10 2.24280000002963e-05
100 3.79649999997511e-05
1000 0.000179436000003363
10000 0.000922897000009471
};
\addlegendentry{Random}
\addplot [semithick, color6, mark=*, mark size=1.5, mark options={solid}]
table {%
10 0.0118911780000002
100 20.416468848
1000 171.197053244
10000 0.837398998
};
\addlegendentry{DP}
\legend{}\end{axis}

\end{tikzpicture}
    }
    \hskip -2ex
    \subcaptionbox{Increasing number of users}{
\begin{tikzpicture}

\definecolor{color0}{rgb}{0.12156862745098,0.466666666666667,0.705882352941177}
\definecolor{color1}{rgb}{1,0.498039215686275,0.0549019607843137}
\definecolor{color2}{rgb}{0.580392156862745,0.403921568627451,0.741176470588235}
\definecolor{color3}{rgb}{0.83921568627451,0.152941176470588,0.156862745098039}
\definecolor{color4}{rgb}{0.172549019607843,0.627450980392157,0.172549019607843}
\definecolor{color5}{rgb}{0.549019607843137,0.337254901960784,0.294117647058824}
\definecolor{color6}{rgb}{0.890196078431372,0.466666666666667,0.76078431372549}

\begin{axis}[width = 0.35\textwidth, height = 0.3\textwidth, 
minor tick style={draw=none}, 
legend style={
  legend pos=outer north east
},
ytick={1e-5, 1e-1, 1e3},
legend cell align={left},
legend style={
  fill opacity=0.8,
  draw opacity=1,
  text opacity=1,
  at={(0.03,0.97)},
  anchor=north west,
  draw=white!80!black
},
log basis x={10},
log basis y={10},
tick align=outside,
tick pos=left,
x grid style={white!69.0196078431373!black},
xlabel={No. of users},
xmin=7.07945784384138, xmax=14125.3754462276,
xmode=log,
xtick style={color=black},
y grid style={white!69.0196078431373!black},
ylabel={Running time (s)},
ymin=1.51781135851901e-05, ymax=4213.34016357291,
ymode=log,
ytick style={color=black}
, legend style={at={(1.2,1)}, anchor=north west}]
\addplot [semithick, color0, mark=*, mark size=1.5, mark options={solid}]
table {%
10 0.0502327759999996
100 0.627029594
1000 8.224133119
10000 83.391840628
};
\addlegendentry{Greedy-U}
\addplot [semithick, color1, mark=*, mark size=1.5, mark options={solid}]
table {%
10 0.0478523379999998
100 0.551392582
1000 8.613276793
10000 87.559526716
};
\addlegendentry{Greedy-W}
\addplot [semithick, color2, mark=*, mark size=1.5, mark options={solid}]
table {%
10 0.0928983840000002
100 1.276945174
1000 14.70474761
10000 160.100372764
};
\addlegendentry{Subm}
\addplot [semithick, color3, mark=*, mark size=1.5, mark options={solid}]
table {%
10 0.150920381
100 2.618953899
1000 28.289342302
10000 298.566623363
};
\addlegendentry{AG}
\addplot [semithick, color4, mark=*, mark size=1.5, mark options={solid}]
table {%
10 0.013993299
100 0.140254461
1000 1.340758859
10000 13.681021741
};
\addlegendentry{Quality}
\addplot [semithick, color5, mark=*, mark size=1.5, mark options={solid}]
table {%
10 4.00859999998282e-05
100 3.67289999996245e-05
1000 4.02539999981855e-05
10000 4.98390000984728e-05
};
\addlegendentry{Random}
\addplot [semithick, color6, mark=*, mark size=1.5, mark options={solid}]
table {%
10 0.132958667
100 12.836610665
1000 1741.146112783
10000 nan
};
\addlegendentry{DP}
\end{axis}

\end{tikzpicture}
    }
    \caption{\label{fig:time}Running time of all methods for the task of making a synthetic playlist.
    }
\end{figure}
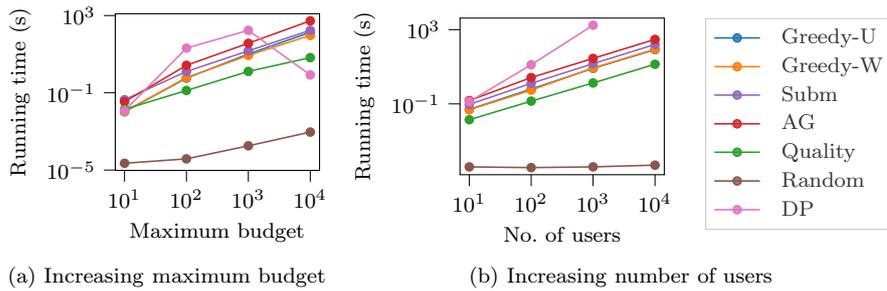

\section{Conclusions}
\label{section:conclusion}

In this paper, we introduce a novel problem in the active area of submodular optimization. 
Our problem, {max-submodular ranking} (\msr),  
ask to find a ranking of items such that the sum of multiple budgeted submodular utility is maximized.
The \msr problem has wide application in the ranking of web pages, ads, and other types of items.
We propose several practical algorithms with approximation guarantees for the \msr problem, 
with either cardinality or knapsack budget constraints.
We empirically demonstrate the superior performance of the proposed algorithms on real-life datasets, 
compared with a state-of-the-art baseline and other meaningful heuristics.

One direction for future work is to narrow the gap between the approximation ratio and the lower bound.
Another direction is to study the online version of the \msr problem, 
to allow for the arrival of new submodular functions.
Other potential directions include imposing a more general constraint for each submodular function and experimenting with new applications.

\begin{acknowledgements}
This research is supported by the Academy of Finland projects MALSOME (343045), AIDA (317085) and MLDB (325117),
the ERC Advanced Grant REBOUND (834862), 
the EC H2020 RIA project SoBigData++ (871042), 
and the Wallenberg AI, Autonomous Systems and Software Program (WASP) 
funded by the Knut and Alice Wallenberg Foundation.
\end{acknowledgements}

%
%

\bibliographystyle{spbasic}
\bibliography{references}

\begin{thebibliography}{27}
\providecommand{\natexlab}[1]{#1}
\providecommand{\url}[1]{{#1}}
\providecommand{\urlprefix}{URL }
\expandafter\ifx\csname urlstyle\endcsname\relax
  \providecommand{\doi}[1]{DOI~\discretionary{}{}{}#1}\else
  \providecommand{\doi}{DOI~\discretionary{}{}{}\begingroup
  \urlstyle{rm}\Url}\fi
\providecommand{\eprint}[2][]{\url{#2}}

\bibitem[{Azar and Gamzu(2011)}]{azar2011ranking}
Azar Y, Gamzu I (2011) Ranking with submodular valuations. In: Proceedings of
  the twenty-second annual ACM-SIAM symposium on Discrete Algorithms, SIAM, pp
  1070--1079

\bibitem[{Azar et~al.(2009)Azar, Gamzu, and Yin}]{azar2009multiple}
Azar Y, Gamzu I, Yin X (2009) Multiple intents re-ranking. In: Proceedings of
  the forty-first annual ACM symposium on Theory of computing, pp 669--678

\bibitem[{Bansal et~al.(2010)Bansal, Jain, Kazeykina, and
  Naor}]{bansal2010approximation}
Bansal N, Jain K, Kazeykina A, Naor JS (2010) Approximation algorithms for
  diversified search ranking. In: International Colloquium on Automata,
  Languages, and Programming, Springer, pp 273--284

\bibitem[{Bertin-Mahieux et~al.(2011)Bertin-Mahieux, Ellis, Whitman, and
  Lamere}]{Bertin-Mahieux2011}
Bertin-Mahieux T, Ellis DP, Whitman B, Lamere P (2011) The million song
  dataset. In: {Proceedings of the 12th International Conference on Music
  Information Retrieval ({ISMIR} 2011)}

\bibitem[{Blei et~al.(2003)Blei, Ng, and Jordan}]{blei2003latent}
Blei DM, Ng AY, Jordan MI (2003) Latent dirichlet allocation. the Journal of
  machine Learning research 3:993--1022

\bibitem[{Carbonell and Goldstein(1998)}]{carbonell1998use}
Carbonell JG, Goldstein J (1998) The use of {MMR}, diversity-based reranking
  for reordering documents and producing summaries. In: SIGIR

\bibitem[{Dua and Graff(2017)}]{UCI}
Dua D, Graff C (2017) {UCI} machine learning repository.
  \urlprefix\url{http://archive.ics.uci.edu/ml}

\bibitem[{Feige(1998)}]{feige1998threshold}
Feige U (1998) A threshold of ln n for approximating set cover. Journal of the
  ACM (JACM) 45(4):634--652

\bibitem[{Feige et~al.(2004)Feige, Lov{\'a}sz, and
  Tetali}]{feige2004approximating}
Feige U, Lov{\'a}sz L, Tetali P (2004) Approximating min sum set cover.
  Algorithmica 40(4):219--234

\bibitem[{Feldman et~al.(2020)Feldman, Nutov, and
  Shoham}]{feldman2020practical}
Feldman M, Nutov Z, Shoham E (2020) Practical budgeted submodular maximization.
  arXiv preprint arXiv:200704937

\bibitem[{Gamzu(2010)}]{gamzu2010web}
Gamzu I (2010) Web search ranking and allocation mechanisms. PhD thesis, Tel
  Aviv University

\bibitem[{Harper and Konstan(2015)}]{harper2015movielens}
Harper FM, Konstan JA (2015) The movielens datasets: History and context. Acm
  transactions on interactive intelligent systems (tiis) 5(4):1--19

\bibitem[{Jansen et~al.(2008)Jansen, Booth, and Spink}]{jansen2008determining}
Jansen BJ, Booth DL, Spink A (2008) Determining the informational,
  navigational, and transactional intent of web queries. Information Processing
  \& Management 44(3):1251--1266

\bibitem[{Kempe et~al.(2015)Kempe, Kleinberg, and Tardos}]{kempe2015maximizing}
Kempe D, Kleinberg J, Tardos {\'E} (2015) Maximizing the spread of influence
  through a social network. Theory OF Computing 11(4):105--147

\bibitem[{Krause and Golovin(2014)}]{krause2014submodular}
Krause A, Golovin D (2014) Submodular function maximization. Tractability
  3:71--104

\bibitem[{Krause et~al.(2008)Krause, Singh, and Guestrin}]{krause2008near}
Krause A, Singh A, Guestrin C (2008) Near-optimal sensor placements in gaussian
  processes: Theory, efficient algorithms and empirical studies. Journal of
  Machine Learning Research 9(2)

\bibitem[{Leskovec et~al.(2007)Leskovec, Krause, Guestrin, Faloutsos,
  VanBriesen, and Glance}]{leskovec2007cost}
Leskovec J, Krause A, Guestrin C, Faloutsos C, VanBriesen J, Glance N (2007)
  Cost-effective outbreak detection in networks. In: Proceedings of the 13th
  ACM SIGKDD international conference on Knowledge discovery and data mining,
  pp 420--429

\bibitem[{Lin and Bilmes(2011)}]{lin2011class}
Lin H, Bilmes J (2011) A class of submodular functions for document
  summarization. In: Proceedings of the 49th annual meeting of the association
  for computational linguistics: human language technologies, pp 510--520

\bibitem[{Nemhauser and Wolsey(1978)}]{nemhauser1978best}
Nemhauser GL, Wolsey LA (1978) Best algorithms for approximating the maximum of
  a submodular set function. Mathematics of operations research 3(3):177--188

\bibitem[{Nemhauser et~al.(1978)Nemhauser, Wolsey, and
  Fisher}]{nemhauser1978analysis}
Nemhauser GL, Wolsey LA, Fisher ML (1978) An analysis of approximations for
  maximizing submodular set functions—{I}. Mathematical programming
  14(1):265--294

\bibitem[{Ni et~al.(2019)Ni, Li, and McAuley}]{ni2019justifying}
Ni J, Li J, McAuley J (2019) Justifying recommendations using distantly-labeled
  reviews and fine-grained aspects. In: Proceedings of the 2019 Conference on
  Empirical Methods in Natural Language Processing and the 9th International
  Joint Conference on Natural Language Processing (EMNLP-IJCNLP), pp 188--197

\bibitem[{Streeter and Golovin(2008)}]{streeter2008online}
Streeter M, Golovin D (2008) An online algorithm for maximizing submodular
  functions. In: Proceedings of the 21st International Conference on Neural
  Information Processing Systems, pp 1577--1584

\bibitem[{Wei et~al.(2015)Wei, Iyer, and Bilmes}]{wei2015submodularity}
Wei K, Iyer R, Bilmes J (2015) Submodularity in data subset selection and
  active learning. In: International Conference on Machine Learning, PMLR, pp
  1954--1963

\bibitem[{Yaroslavtsev et~al.(2020)Yaroslavtsev, Zhou, and
  Avdiukhin}]{yaroslavtsev2020bring}
Yaroslavtsev G, Zhou S, Avdiukhin D (2020) “bring your own greedy”+ max:
  Near-optimal 1/2-approximations for submodular knapsack. In: International
  Conference on Artificial Intelligence and Statistics, PMLR, pp 3263--3274

\bibitem[{Zhai et~al.(2015)Zhai, Cohen, and Lafferty}]{zhai2015beyond}
Zhai C, Cohen WW, Lafferty J (2015) Beyond independent relevance: methods and
  evaluation metrics for subtopic retrieval. In: SIGIR

\bibitem[{Zhang et~al.(2012)Zhang, Chong, Pezeshki, Moran, and
  Howard}]{zhang2012submodularity}
Zhang Z, Chong EK, Pezeshki A, Moran W, Howard SD (2012) Submodularity and
  optimality of fusion rules in balanced binary relay trees. In: 2012 IEEE 51st
  IEEE Conference on Decision and Control (CDC), IEEE, pp 3802--3807

\bibitem[{Zhu et~al.(2017)Zhu, Liu, and Lopes}]{zhu2017no}
Zhu X, Liu J, Lopes M (2017) No learner left behind: On the complexity of
  teaching multiple learners simultaneously. In: IJCAI, pp 3588--3594

\end{thebibliography}

\end{document}